\documentclass[letterpaper,11pt]{article}
\usepackage{amsmath, amssymb}
\usepackage{color}
\usepackage{fullpage}

\usepackage[noend]{algorithmic}
\usepackage{algorithm}

\usepackage{enumerate}

\usepackage{hyperref}

\usepackage{graphicx}
\usepackage{caption}
\usepackage{subcaption}
\usepackage{zerosum,csquotes}
\usepackage{enumitem}

 \newtheorem{theorem}{Theorem}[section]
 \newtheorem{lemma}[theorem]{Lemma}

 \newtheorem{claim}[theorem]{Claim}

\makeatletter
\def\GrabProofArgument[#1]{ #1: \egroup\ignorespaces}
\def\proof{\noindent\textbf\bgroup Proof%
	\@ifnextchar[{\GrabProofArgument}{. \egroup\ignorespaces}}

\makeatother


\newcommand{\alg}{{\sf GA}}

\newcommand{\ex}[1]{\operatorname{E}\left[#1\right]}

\newcommand*\samethanks[1][\value{footnote}]{\footnotemark[#1]}

\newcommand{\eat}[1]{}

\DeclareMathOperator{\OPT}{\mathrm{OPT}}
\newcommand{\dbsf}{\text{\sc degree-bounded Steiner forest}}
\newcommand{\dbst}{\text{\sc degree-bounded Steiner tree}}
\newcommand{\dbsp}{\text{\sc degree-bounded spanning tree}}
\newcommand{\odbsf}{\text{\sc online degree-bounded Steiner forest}}
\newcommand{\odbst}{\text{\sc online degree-bounded Steiner tree}}
\newcommand{\odbgst}{\text{\sc online degree-bounded group Steiner tree}}
\DeclareMathOperator{\CC}{CC}

\newcommand{\x}{\mathbf{x}}
\newcommand{\y}{\mathbf{y}}
\newcommand{\z}{\mathbf{z}}
\newcommand{\Mc}[1]{\mathcal{#1}}

\newcounter{proccnt}

\title{Online Degree-Bounded Steiner Network Design}

\author{
Sina Dehghani \thanks{University of Maryland. email: \texttt{\{dehghani, ehsani\}@umd.edu \{hajiagha, vliaghat\}@cs.umd.edu}}
\thanks{Supported in part by NSF CAREER award 1053605, NSF grant CCF-1161626, ONR YIP award N000141110662, DARPA/AFOSR grant FA9550-12-1-0423, and a Google faculty research award.}
\and Soheil Ehsani \samethanks[1] \samethanks[2]
\and MohammadTaghi Hajiaghayi \samethanks[1] \samethanks[2]
\and Vahid Liaghat \samethanks[1] \samethanks[2] \thanks{Supported in part by a Google PhD Fellowship.}
\and Harald R\"{a}cke  \thanks{Technische Universit\"{a}t M\"{u}nchen. email: raecke@in.tum.de}
}

\begin{document}

\sloppy

%
%

\date{}
\maketitle

\begin{abstract}

We initiate the study of
degree-bounded network design problems in the online setting.
The degree-bounded Steiner tree problem -- which
asks for a subgraph with minimum degree that connects a
given set of vertices -- is perhaps one of the
most representative problems in this class. This paper
deals with its well-studied generalization called the
degree-bounded Steiner forest problem where the connectivity
demands are represented by vertex pairs that need to be individually connected.
In the classical online model, the input graph is
given offline but the demand pairs arrive sequentially
in online steps. The selected subgraph starts off as the 
empty subgraph, but has to be augmented to satisfy
the new connectivity constraint in each online step.
The goal is to be competitive against an adversary that knows the input in advance.

We design a simple greedy-like algorithm that achieves a competitive
ratio of $O(\log n)$ where $n$ is the number of vertices.
We show that no (randomized) algorithm can achieve a (multiplicative)
competitive ratio $o(\log n)$; thus our result is asymptotically tight.
We further show strong hardness results for the group Steiner tree and
the edge-weighted variants of degree-bounded connectivity problems.

F{\"u}rer and Raghavachari resolved the offline variant of degree-bounded
Steiner forest in their paper in SODA'92. Since then, the natural family of
degree-bounded network design problems has been extensively studied in the
literature resulting in the development of many interesting tools and
numerous papers on the topic.
We hope that our approach in this paper, paves the way for solving the
\textit{online} variants of the classical problems in this family of network design problems.
\end{abstract}
\thispagestyle{empty}

\clearpage
\setcounter{page}{1}

\newpage

\section{Introduction}
The problem of satisfying connectivity demands on a graph while respecting
given constraints has been a pillar of the area of network design since the early
seventies~\cite{C73,CGM80,CG82,PY82}. The problem of \dbsp{}, introduced in
Garey and Johnson's \textit{Black Book} of NP-Completeness~\cite{GJ79}, was
first investigated in the pioneering work of F{\"u}rer and
Raghavachari~\cite{FR90} (Allerton'90). In the \dbsp{} problem, the goal is to
construct a spanning tree for a graph $G=(V,E)$ with $n$ vertices whose maximal
degree is the smallest among all spanning trees. Let $b^*$ denote the maximal
degree of an optimal spanning tree. F{\"u}rer and Raghavachari~\cite{FR90}
give a parallel approximation algorithm which produces a spanning tree of
degree at most $O(\log(n) b^*)$.


Agrawal, Klein, and Ravi~(\cite{AKR91a}) consider the following generalizations
of the problem. In the \dbst{} problem we are only required to connect a given
subset $T\subseteq V$. In the even more general \dbsf{} problem the demands
consist of vertex pairs, and the goal is to output a subgraph in which for
every demand there is a path connecting the pair. They design an algorithm
that obtains a multiplicative approximation factor of $O(\log(n))$.
Their main technique is to reduce the problem to
minimizing congestion under integral concurrent flow restrictions and to then
use the randomized rounding approach due to Raghavan and Thompson~(\cite{RT85},
STOC'85).

%
%
%
%


Shortly after the work of Agrawal~\textit{et al.}, in an independent work in
SODA'92 and later J.~of Algorithms'94, F{\"u}rer and
Raghavachari~\cite{FR94} significantly improved the result for \dbsf{} by
presenting an algorithm which produces a Steiner forest with maximum degree at
most $b^*+1$. They show that the same guarantee carries over to the
\textit{directed} variant of the problem as well. Their result is based on
reducing the problem to that of computing a sequence of maximal matchings on
certain auxiliary graphs. This result settles the approximability of the
problem, as computing an optimal solution is NP-hard even in the spanning tree
case.

In this paper, we initiate the study of degree-bounded network design problems
\textit{in an online setting}, where connectivity demands appear over time and must be
immediately satisfied. We first design a deterministic algorithm for \odbsf{}
with a logarithmic competitive ratio. Then we show that this competitive ratio
is asymptotically best possible by proving a matching lower bound for
randomized algorithms that already holds for the Steiner tree variant of the problem.


In the offline scenario, the results of F{\"u}rer,
Raghavachari~\cite{FR90,FR94} and Agrawal, Klein, Ravi~\cite{AKR91a} were the
starting point of a very popular line of work on various degree-bounded network
design problems~\cite{MRSR98,G06,N12,LS13,KKN13,EV14}. We refer the reader to
the next sections for a brief summary. One particular variant that has been
extensively studied is the \textit{edge-weighted} \dbsp{}. Initiated by
Ravi~\textit{et al.} (\cite{MRSR98}, J.~of Algorithms'98), in this version, we
are given a weight function over the edges and a bound $b$ on the maximum
degree of a vertex. The goal is to find a minimum-weight spanning tree with
maximum degree at most $b$. The groundbreaking results obtained by
Goemans~(\cite{G06}, FOCS'06) and Singh and Lau~(\cite{LS07},~STOC'07) settle
the problem by giving an algorithm that computes a minimum-weight spanning tree
with degree at most $b+1$. Slightly worse result are obtained by Singh and
Lau~(\cite{LS13}, STOC'08) for the Steiner tree variant.
Unfortunately, in the online setting it is not possible to obtain a comparable
result. We show that for any (randomized) algorithm $\Mc{A}$ there exists a request
sequence such that $\Mc{A}$ outputs a sub-graph that either has weight
$\Omega(n)\cdot\mathrm{OPT}_b$ or maximum degree $\Omega(n)\cdot b$.

%



\subsection{Our Contributions}
In the online variant of \dbsf{}, we are given the graph $G$ in advance,
however, demands arrive in an online fashion. At online step $i$, a new
demand $(s_i,t_i)$ arrives. Starting from an empty subgraph, at each step the
online algorithm should augment its solution so that the endpoints of the new
demand $s_i$ and $t_i$ are connected. The goal is to minimize the maximum
degree of the solution subgraph. In the \textit{non-uniform} variant of the
problem, a degree bound $b_v\in \mathbb{R}^+$ is given for every vertex $v$.
For a subgraph $H$ and a vertex $v$, let $\deg_H(v)$ denote the degree of $v$
in $H$. The \textit{load} of a vertex is defined as the ratio
${\deg_H(v)}/{b_v}$. In the non-uniform variant of \odbsf{}, the goal is to
find a subgraph satisfying the demands while minimizing the maximum load of a
vertex.

Our algorithm is a simple and intuitive greedy algorithm. Upon the arrival of a
new demand $(s_i,t_i)$, the greedy algorithm (GA) satisfies the demand by
choosing an $(s_i,t_i)$-path $P_i$ such that after augmenting the solution with
$P_i$, the maximum load of a vertex in $P_i$ is minimum. 
A main result of our
paper is to prove that the maximum load of a vertex in the output of GA is
within a logarithmic factor of $\OPT$, the maximum load of a vertex in an
optimal offline solution which knows all the demands in advance.

\begin{theorem}\label{thm:mainlog}
	The algorithm GA produces an output with maximum load at most
	$O(\log n)\cdot \OPT$.
\end{theorem}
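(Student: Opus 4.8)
\emph{Approach.} I would run an exponential--potential (Lagrangian) argument comparing \alg{} with an optimal offline solution. Write $\lambda:=\OPT$ and fix an optimal subgraph $\optm$; we may assume $\optm$ is a forest, and for each demand $i$ let $Q_i$ denote the $s_i$--$t_i$ path contained in $\optm$. The only property of $\optm$ we will need is that $\deg_{Q_i}(v)\le\deg_{\optm}(v)\le\lambda\, b_v$ for every vertex $v$, so augmenting any subgraph by $Q_i$ raises the load of each vertex of $Q_i$ by at most $\lambda$. Introduce the potential $\Phi(H):=\sum_{v\in V}\beta^{\deg_H(v)/(\lambda b_v)}$ for a constant $\beta>1$ to be fixed; then $\Phi(\emptyset)=n$, and if $u$ realizes \alg's maximum load $\Lambda$ then $\beta^{\Lambda/\lambda}\le\Phi(\algm)$. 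Hence it suffices to prove $\Phi(\algm)=n^{O(1)}$, which yields $\Lambda\le O(\log n)\cdot\lambda$ as required.

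\smallskip\noindent\emph{Key steps.} Write $\algm_{i-1}$, $\algm_i=\algm_{i-1}\cup P_i$ for \alg's subgraphs around step $i$, and let $r_i$ be the maximum load of a vertex of $P_i$ in $\algm_i$. Since \alg{} could instead have routed demand $i$ along $Q_i$, the greedy rule gives $r_i\le\max_{v\in Q_i}\deg_{\algm_{i-1}\cup Q_i}(v)/b_v\le\lambda+\max_{v\in Q_i}\deg_{\algm_{i-1}}(v)/b_v$. This yields two handles: (i) every vertex whose degree \alg{} increases at step $i$ ends that step with load at most $r_i$, which caps how much one augmentation can raise the contribution of each touched vertex to $\Phi$; and (ii) $r_i$ exceeds a level $t$ only if $\optm$ already passes through a vertex whose \alg-load is at least $t-\lambda$, so the costly augmentations can be charged to the (few) vertices that $\optm$ uses heavily. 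Bounding the telescoped increments $\sum_i\bigl(\Phi(\algm_i)-\Phi(\algm_{i-1})\bigr)$ by $c(\beta)\cdot\lambda\cdot\Phi(\algm)$ for a factor $c(\beta)\to0$ as $\beta\downarrow1$, and then taking $\beta>1$ a small enough constant that $c(\beta)<\tfrac12$, gives $\Phi(\algm)\le 2n$ and hence $\Lambda=O(\lambda\log n)$. (Vertices $v$ with $b_v$ so small that a single incident edge already exceeds load $\lambda$ need only trivial separate handling.)

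\smallskip\noindent\emph{Main obstacle.} The delicate point is the charging above: a greedy path $P_i$ can be long and contain many moderately loaded vertices, so the naive estimate of $\Phi(\algm_i)-\Phi(\algm_{i-1})$ carries a spurious factor of $|P_i|$. Removing it forces genuine use of the fact that $\optm$ is a \emph{forest}. The greedy rule in fact says that at step $i$ the set of vertices currently near the top load separates $s_i$ from $t_i$ in $G$; summed over the run, the load \alg{} deposits near any threshold must be supported by $\optm$-edges crossing such separators, and a forest has only $O\!\bigl(\lambda\sum_{v\in S}b_v\bigr)$ edges incident to any vertex set $S$. Turning this F\"{u}rer--Raghavachari-style component/cut accounting into the clean recursion above --- so that long greedy paths do not blow up the potential --- is, I expect, the technical core of the argument.
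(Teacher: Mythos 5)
There is a genuine gap, and you have put your finger on it yourself: the entire argument hinges on bounding the per-step potential increase $\Phi(\algm_i)-\Phi(\algm_{i-1})$ without the factor $|P_i|$, and this is exactly the step you defer (``is, I expect, the technical core''). Exponential-potential analyses of the Aspnes--Azar--Fiat--Plotkin--Waarts type work because the \emph{algorithm itself} is defined to route along the path of minimum potential increase, so its increase can be charged against the increase along $\OPT$'s path $Q_i$. Here the algorithm is the min--max greedy: its path may have small maximum (uptick) load but very many vertices of moderate load, so its potential increase can vastly exceed that along $Q_i$, and your handles (i) and (ii) --- which only control the \emph{maximum} load $r_i$ on the chosen path, not how many vertices are touched near each load level --- do not by themselves cap the increment. (Indeed, for generic online load-balancing/routing the min--max greedy is \emph{not} polylog-competitive; what saves it here is special structure of degree-bounded Steiner forest, which your sketch never actually exploits quantitatively.) So as written the proposal does not prove $\Phi(\algm)=n^{O(1)}$, hence does not prove the theorem.

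The missing content is precisely what the paper supplies by a different route (dual fitting rather than a potential function). The paper shows: (a) for every threshold $r$, the set $\Gamma(r)$ of vertices with uptick load at least $r$ separates the demands $D(r)$ whose arrival threshold is at least $r$ into $|D(r)|+1$ components (a forest argument on demand endpoints, using that endpoints are degree-one vertices with infinite bound); (b) an \emph{excess} lemma bounding the total degree accumulated by vertices of $\Gamma_b(r)$ after reaching load $r$ by $2|D(r)|+3|\Gamma_b(r)|$, using that a greedy path meets each current component at most twice so large accumulations force component merges; (c) a dual-feasibility argument turning (a) into $|D(r)|/b(\Gamma_\Delta(r))\le\OPT$; and (d) a dyadic partition of the load range into $O(\log n)$ intervals by the value of $b(\Gamma_\Delta(r))$, picking the widest interval to find a single good $r$. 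Your item (ii) is a shadow of (a), and your closing remark about $\OPT$-edges crossing separators is a shadow of (c), but (b) and (d) --- the quantitative accounting that replaces your unproven recursion $\sum_i\bigl(\Phi(\algm_i)-\Phi(\algm_{i-1})\bigr)\le c(\beta)\lambda\,\Phi(\algm)$ --- are absent, and it is not evident that they can be repackaged as a per-step potential bound at all, since the paper's argument is global in $r$ rather than step-by-step. To complete your writeup you would essentially have to prove these lemmas, at which point the exponential potential is doing no work.
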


The crux of our analysis is establishing several structural properties of the
solution subgraph. First we group the demands according to the maximum load of
the bottleneck vertex at the time of arrival of the demand. We then show that
for every threshold $r>0$, vertices with load at least $r$ at the end of the
run of GA, form a cut set that well separates the group of demands with load at
least $r$ at their bottleneck vertex. Since the threshold value can be chosen
arbitrarily, this leads to a series of cuts that form a chain. The greedy
nature of the algorithm indicates that each cut highly disconnects the demands.
Intuitively, a cut that highly disconnects the graph (or the demands) implies a
lower bound on the number of branches of every feasible solution.

We use a natural dual-fitting argument to show that for every cut set, the
ratio between the number of demands in the corresponding group, over the total
degree bound of the cut, is a lower bound for $\OPT$. Hence, the problem comes
down to showing that one of the cuts in the series has ratio at least $1/O(\log
n)$ fraction of the maximum load $h$ of the output of GA. To this end, we first
partition the range of $r\in (0,h]$ into $O(\log n)$ layers based on the total
degree bound of the corresponding cut. We then show that the required cut can
be found in an interval with maximum range of $r$. We analyze GA formally in
Section~\ref{sec:sfalg}.

We complement our first theorem by giving an example for a special case of
\odbst{} in which no online (randomized) algorithm can achieve a competitive
ratio better than $\Omega(\log n)$.

\begin{theorem}\label{thm:LBst}
	Any (randomized) online algorithm for the degree bounded online Steiner tree
	problem has (multiplicative) competitive ratio $\Omega(\log n)$. This already holds when $b_v=1$
	for every node.
\end{theorem}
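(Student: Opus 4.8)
The plan is to construct a hard instance based on a complete binary tree of depth $d$, so that $n = \Theta(2^d)$ and $\log n = \Theta(d)$, and to exhibit an adaptive adversary that forces any randomized algorithm (via Yao's principle, it suffices to give a distribution over request sequences against which every deterministic algorithm does poorly) to incur load $\Omega(d)$ while $\OPT = O(1)$. Since $b_v = 1$ everywhere, load equals degree, so the goal is to force some vertex to acquire degree $\Omega(\log n)$ whereas the offline optimum keeps every degree bounded by a constant. I would take the root $\rho$ as a fixed terminal and let demands be of the form $(\rho, \ell)$ for leaves $\ell$; connecting such a demand requires routing a path from $\rho$ to $\ell$, and in a tree the path is essentially forced, so the interesting structure must come from auxiliary edges that give the algorithm a ``choice'' of which subtree to charge.

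Concretely, I would augment the binary tree with a small gadget at each internal node — for instance, at each node $v$ with children $v_0, v_1$, add an extra vertex $w_v$ adjacent to both $v_0$ and $v_1$ — so that the path from $\rho$ down to a leaf can, at each level, either pass through the ``primary'' child-edge or detour through the gadget vertex $w_v$. The adversary reveals demands one leaf at a time, always descending into the subtree where the algorithm has so far committed fewer gadget detours (i.e., the side where the algorithm's current partial solution is ``cheaper'' and hence more likely to be reused): by a potential/averaging argument, after $d$ levels of this, the algorithm has been forced to build a root-to-leaf path that reuses one particular high vertex $\Omega(d)$ times, driving its degree to $\Omega(d)$. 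Meanwhile the offline optimum, knowing the single leaf $\ell^*$ that will be requested, simply routes all its demands through the direct tree path to $\ell^*$ and uses each gadget at most once, keeping every degree $O(1)$. Averaging over a uniformly random choice of the adversary's descent (equivalently, a random target leaf) gives the bound against deterministic algorithms, and Yao's lemma lifts it to randomized ones.

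The main obstacle I anticipate is getting the gadget design exactly right so that (i) the algorithm genuinely has no good option — every way of connecting the next demand either reuses an already-heavily-loaded vertex or creates a new one that the adversary can then target — and (ii) the offline optimum really is $O(1)$, which requires that the ``union'' of all the demand paths in hindsight forms a bounded-degree subgraph. These two requirements pull in opposite directions: making the instance hard for the online algorithm tends to also raise $\OPT$. The standard fix is a recursive/self-similar construction where the adversary's choices at level $i$ are independent of those at other levels, so that the $\Omega(d)$ lower bound accumulates additively across levels while $\OPT$ stays constant; I would model the argument on known $\Omega(\log n)$ online lower bounds (e.g., for online Steiner-type or congestion problems) where a logarithmically deep recursion with a binary branching adversary is the canonical template. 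Verifying the precise load bound will require a careful inductive claim of the form ``conditioned on the adversary's partial descent, the expected maximum load along the committed path is at least $c \cdot (\text{current depth})$,'' proved by analyzing the algorithm's forced move at each new level.
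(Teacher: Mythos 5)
There is a decisive flaw in your construction: your host graph has constant maximum degree, so no lower bound better than $O(1)$ can be proved on it. In a complete binary tree augmented with one gadget vertex $w_v$ adjacent to the two children of each internal node $v$, every vertex has degree at most $4$ (parent, two children, and the gadget of its parent). Since the online solution is a subgraph of $G$ and $b_v=1$ everywhere, its maximum load is at most $4$, while any nonempty feasible solution has $\OPT\ge 1$; hence every algorithm is $4$-competitive on your instance. The intuition that the adversary forces the algorithm to ``reuse one particular high vertex $\Omega(d)$ times'' conflates reuse with degree: routing many demand paths through a vertex does not raise its degree beyond the number of distinct incident edges it has in $G$. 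Any valid construction must contain vertices of degree $\Omega(\log n)$ in $G$ itself, and the adversary must create genuine uncertainty about \emph{which} such hub the offline solution will use, so that the online algorithm spreads load onto a hub that the offline, in hindsight, avoids entirely. (There is also a secondary confusion in your offline analysis: if the adversary requests $d$ distinct leaves as terminals, the offline must connect all of them, not only a single target leaf $\ell^*$.) Your instinct that a recursive, halving-style adversary is needed, and that one should bound the randomized algorithm's expected behavior, is sound, but the proposal as written lacks the mechanism that makes the two requirements (online heavy, offline light) compatible, which you yourself flag as the open obstacle.

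For comparison, the paper's construction takes a set $Z$ of $2^\ell$ potential hub vertices, pairwise adjacent, together with a private degree-two terminal $x\in X$ for \emph{every pair} $\{z_1,z_2\}\subseteq Z$; thus hubs have degree $\Theta(2^\ell)$ and $\ell=\Theta(\log n)$. An inductive lemma over subsets $S$ of size $2^s$ produces a request sequence of such terminals and a hub $z^*\in S$ whose expected online degree (to requested terminals) is at least $s/2$, while an offline solution serves every requested terminal with a single edge, keeps all hub degrees at most $1$, and keeps $z^*$ at degree $0$. The induction step is the key move your proposal is missing: run the sequence for $S_1$ to make some $z_1^*$ heavy, run it for $S_2$ (against the algorithm conditioned on its earlier edges) to make some $z_2^*$ heavy, then request the one terminal adjacent to exactly $z_1^*$ and $z_2^*$; the online algorithm must, in expectation, add at least $1/2$ to one of them, whereas the offline can serve this terminal through whichever hub does \emph{not} end up heavy. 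This yields expected online load $\Omega(\log n)$ against $\OPT=O(1)$ directly for randomized algorithms, without routing the argument through Yao's principle.
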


We further investigate the following extensions of the online degree bounded
Steiner tree problem. 
First, we
consider the edge-weighted variant of the degree-bounded Steiner tree problem.
Second, we consider the group Steiner tree version in
which each demand consists of a subset of vertices, and the goal is to
find a tree that covers at least one vertex of each demand group. 
The following theorems show that one cannot obtain a non-trivial competitive
ratio for these versions in their general form.\footnote{Our lower bound results imply that one needs
  to restrict the input in order to achieve competitiveness. In particular for
  the edge-weighted variant, our proof does not rule out the existence of a
  competitive algorithm when the edge weights are polynomially bounded. 
}

%
%

\begin{theorem}\label{thm:ewLB}
Consider the edge weighted variant of \odbst{}. For any (randomized) online
algorithm $\Mc{A}$, there exists an instance and a request sequence such that 
either $\ex{\operatorname{maxdegree}(\Mc{A})}\ge\Omega(n)\cdot b$ or
$\ex{\operatorname{weight}(\Mc{A})}\ge\Omega(n)\cdot \OPT_b$, where $\OPT_b$ denotes
the minimum weight of a Steiner tree with maximum degree $b$.
%
\end{theorem}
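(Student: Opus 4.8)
The plan is to prove the lower bound via Yao's minimax principle. I would construct a single weighted instance of \odbst{} together with a distribution $\Mc{D}$ over request sequences on it, and show that every \emph{deterministic} online algorithm, run on a sequence drawn from $\Mc{D}$, in expectation either produces a vertex of degree $\Omega(n)\cdot b$ or buys a subgraph of weight $\Omega(n)\cdot\OPT_b$. Since a randomized algorithm is a distribution over deterministic ones, the same dichotomy then holds for randomized $\Mc{A}$ against the worst sequence in the support of $\Mc{D}$, which is exactly the claim. So the argument reduces to (1) choosing the instance and $\Mc{D}$, (2) upper bounding the offline optimum for every sequence in the support, and (3) lower bounding the cost of an arbitrary deterministic online algorithm against $\Mc{D}$.

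For the instance I would use a recursively built weighted gadget in the spirit of the classical online network-design lower bounds, but tuned so that the per-vertex degree cap, rather than mere connectivity, is the binding resource. The top of the gadget is a root $r$ joined to the rest by edges of total weight $\OPT_b$; below it sits a hierarchy of branchings reaching down to $\Theta(n)$ potential terminals, and alongside the hierarchy there are cheap ``shortcut'' edges that let an algorithm splice a newly requested terminal into its current subgraph for little weight, \emph{but only through a small family of ``junction'' vertices}, each carrying the degree bound $b$. The edge weights are spread over a wide (possibly super-polynomial) range, so that a terminal spliced in through a junction at one scale gives essentially no help when the next, finer-scale terminal must be connected --- this is the mechanism that turns the $\Theta(\log n)$ of the un-degree-bounded setting into $\Theta(n)$. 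The distribution $\Mc{D}$ reveals $\Theta(n)$ of the potential terminals one at a time in a coarse-to-fine order, with enough internal randomness that from any prefix an online algorithm cannot predict which finer terminals will follow nor through which junctions they will be cheaply reachable.

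Given this setup, step (2) is the easy direction: for every sequence in the support I exhibit one Steiner tree connecting $r$ to all revealed terminals that routes them so each junction carries at most $b$ of them --- this is exactly where knowing the whole sequence is used, and feasibility of such a balanced routing follows from a Hall-type counting argument on how the revealed set was selected --- and whose total weight is $\OPT_b$; hence the offline optimum has maximum degree $b$ and weight $\OPT_b$. Step (3) is the crux, and I would argue it by a per-step charging argument. When terminal $t$ is presented, the online algorithm must connect it to $r$: either it pays for a ``long'' route, which the gadget is arranged to make cost $\Omega(\OPT_b)$, so that even a constant fraction of such routes over the $\Theta(n)$ steps already forces weight $\Omega(n)\cdot\OPT_b$; or it splices $t$ in through some junction reachable from the part of its subgraph built so far, adding one to that junction's degree. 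Because the algorithm committed to its earlier splices without seeing $t$, and because $\Mc{D}$ keeps presenting terminals cheaply reachable only through junctions the algorithm has already loaded, some junction is driven to degree $\Omega(n)\cdot b$ on the alternative. Summing the per-step guarantees over the $\Theta(n)$ requests yields the dichotomy in expectation, and Yao's principle finishes the proof.

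The main obstacle is engineering the gadget so that step (3) is airtight against \emph{every} online strategy, not just the two extreme ones: the adversary distribution must punish ``hoard weight, overload a junction'', ``protect degrees, buy long routes'', and every hedging strategy in between, while step (2) still goes through. Equivalently, the shortcut/junction structure must be simultaneously rich enough that an omniscient solver can always balance the requested terminals across junctions within the cap $b$, and sparse enough that an online solver provably cannot match that balance --- and the weight scales must be chosen so that each of the $\Theta(n)$ individual terminal insertions, not merely each of the $O(\log n)$ ``levels'', costs a full $\Omega(\OPT_b)$ in weight or a full unit of junction degree. Verifying that the recursive layout forces this per-terminal (rather than per-level) cost against any adaptive online play is the technical heart of the argument.
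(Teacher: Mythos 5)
Your proposal is a plan rather than a proof: the entire burden of the theorem lies in exhibiting a concrete instance and adversary for which the dichotomy provably holds against every online strategy, and this is exactly the part you leave open. You describe the desired gadget only by the properties it should have (junctions with cap $b$, shortcut edges at many weight scales, a Hall-type balanced offline routing, a per-step guarantee that each of $\Theta(n)$ insertions costs $\Omega(\OPT_b)$ in weight or one unit of junction degree), and you yourself flag that verifying the per-terminal charge against arbitrary adaptive hedging is ``the technical heart of the argument.'' Without the explicit construction, the choice of weights, and a proof of that per-step claim, nothing is actually established. There is also a smaller glossed-over point in your reduction: the statement is a bi-criteria dichotomy, so to pass from ``every deterministic algorithm loses in expectation over $\Mc{D}$'' to a statement about a randomized algorithm on a single sequence you need to scalarize (e.g.\ bound $\ex{\max(\deg/(nb),\,\mathrm{weight}/(n\cdot\OPT_b))}$ and then split), not just invoke Yao's principle verbatim; this is fixable but is not automatic as written.

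It is also worth noting that the route you are aiming for is substantially harder than necessary. The paper's proof uses one tiny explicit graph on $n=2k+1$ vertices: each terminal $i$ can reach the root either by a zero-weight edge into the root (loading the root's degree) or by a single private edge of weight $n^i$ that feeds into a zero-weight path back to the root. Because the weights grow geometrically with ratio $n$, the adversary does not need every step to cost $\Omega(\OPT_b)$: it tracks the probability $q_{ii}$ that the algorithm has bought the heavy edge of terminal $i$, stops at the first step $r$ with $q_{rr}>1/2$ (then the single edge $n^r$ already exceeds $n$ times the optimum $O(n^{r-1})$ for that prefix), and otherwise runs all $k$ steps, forcing expected root degree at least $k/2=\Omega(n)$ while the offline tree has maximum degree $3$. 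This ``stop at the first likely expensive purchase'' idea, combined with geometric weight scaling so that one bad step dominates all of $\OPT$, is the key mechanism your sketch is missing; with it, no hierarchy of junctions, no Hall-type balancing, and no per-step charging over $\Theta(n)$ requests are needed.
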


\begin{theorem}\label{thm:gstLB}
  There is no deterministic algorithm with competitive ratio $o(n)$ for the
  {\sc degree-bounded group Steiner tree} problem.
\end{theorem}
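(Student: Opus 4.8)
The plan is to build a family of instances on which every deterministic online algorithm is forced, demand by demand, into a ``wrong'' branch, while the offline optimum can satisfy all demands using a single small-degree tree. The natural structure to exploit is a tree of depth $\Theta(\log n)$ in which each leaf corresponds to one possible ``path choice'', so that committing to serving a group forces the algorithm to pick a root-to-leaf branch, and subsequent groups are chosen adversarially to lie in the subtree the algorithm did \emph{not} enter. Concretely, I would take $G$ to be (essentially) a complete binary tree of height $h$ with $n = \Theta(2^h)$ leaves, all degree bounds uniform, and present groups $D_1, D_2, \dots$ where $D_1$ is the set of all leaves, and in general $D_{i+1}$ is the set of leaves hanging off the child of the current ``frontier'' vertex that the algorithm's tree does not already reach. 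Each new group requires the algorithm to route from its existing subtree to a new leaf, incrementing the degree of a fresh vertex along a new branch; after $\Theta(n)$ such demands the algorithm's tree has $\Omega(n)$ branches emanating from some vertex (or, more carefully, $\Omega(n)$ edges incident to the path it has committed to), forcing maximum degree $\Omega(n)$. Meanwhile the adversary, knowing the whole sequence in advance, can route every demanded leaf through one carefully chosen spine, or simply pick a single leaf that lies in \emph{all} the groups $D_i$ (by construction the groups are nested or share a common deep leaf), connecting to it once and thereby satisfying everything with a path of maximum degree $O(1)$ — or at worst $O(1)$ after we rescale, giving $\OPT = O(1)$.

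The key steps, in order, are: (1) fix the host graph $G$ and the uniform bound $b$; (2) define the adversarial group sequence recursively, maintaining the invariant that at step $i$ the algorithm's current subgraph $H_i$ is disjoint from the ``active region'' $R_i$, and that $D_{i+1} \subseteq R_i$ so the algorithm is forced to spend a new edge crossing into $R_i$; (3) show that each forced augmentation increments the degree of a common vertex (the apex of the active region, or the point where the algorithm's tree last touches the spine) — this is where one must be careful that the algorithm cannot ``reuse'' previously bought edges, which is guaranteed because $R_i$ is chosen to be freshly disjoint from $H_i$; (4) conclude that after $m = \Theta(n)$ steps the algorithm has maximum degree $\Omega(m) = \Omega(n)$; (5) exhibit the offline solution — a single root-to-leaf path (or small subtree) hitting one vertex of every $D_i$ — of maximum degree $O(1)$, so the competitive ratio is $\Omega(n)$.

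The main obstacle I expect is step~(3): making the adversary's choice genuinely force a \emph{new} high-degree vertex rather than merely a new edge somewhere harmless in the graph. The algorithm is free to connect the new leaf to any vertex of its current tree, so a naive binary tree does not immediately work — the algorithm could grow a long thin path and keep all degrees bounded by $3$. The fix is to design $G$ so that \emph{every} route from $H_i$ into the active region $R_i$ must pass through a single bottleneck vertex $v^*$ (or a small set of them), e.g.\ by making $R_i$ reachable only through $v^*$ via otherwise vertex-disjoint length-two connectors, one per leaf of $R_i$; then each of the $\Omega(n)$ demands that target distinct leaves of $R_i$ adds a distinct edge at $v^*$. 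Equivalently — and this is probably the cleanest implementation — use a ``spider/star-of-paths'' gadget: a center $c$ with $n$ legs, where group $D_i$ is the set of far endpoints of all legs the algorithm has not yet reached; the algorithm is forced to buy a new leg each time (degree of $c$ grows to $\Omega(n)$), whereas offline picks the single leg endpoint common to all groups. One must then double-check that this graph has $\Theta(n)$ vertices and that the group definitions remain valid (nonempty, contained in the fresh region) throughout, and that no randomized argument is needed since the theorem only claims a deterministic lower bound — so adaptivity of the adversary against the deterministic algorithm's committed choices is exactly what drives the bound.
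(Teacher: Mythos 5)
Your final ``cleanest implementation'' --- a star/spider with center as root, an adaptive adversary that repeatedly demands the set of leaf endpoints your tree has not yet reached (so the groups are nested and the center's degree is driven to $n-1$), and an offline solution that connects the single leaf lying in all groups for degree $O(1)$ --- is exactly the paper's construction and argument. The binary-tree route you start with and then correctly discard is not needed; the rest is correct and essentially identical to the paper's proof.
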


\subsection{Related Degree-Bounded Connectivity Problems}\label{sec:related}
The classical family of degree-bounded network design problems have various
applications in broadcasting information, package distribution, decentralized
communication networks, etc.~(see e.g.\
\cite{garcia1988implementation,huang2003publish}). Ravi~\textit{et
  al.}~(\cite{MRSR98},~J. Algorithms'98), first considered the general
\textit{edge-weighted} variant of the problem. They give a bi-criteria $(O(\log
n), O(\log n)\cdot b)$-approximation algorithm, i.e., the degree of every node
in the output tree is $O(\log n)\cdot b$ while its total weight is $O(\log
n)$ times the optimal weight. A long line of work~(see e.g.
\cite{KR00},~STOC'00 and \cite{KR05},~SIAM J. C.) was focused on this problem
until a groundbreaking breakthrough was obtained by
Goemans~(\cite{G06},~FOCS'06); his algorithm computes a minimum-weight spanning
tree with degree at most $b+2$. Later on, Singh and Lau~(\cite{LS07},~STOC'07)
improved the degree approximation factor by designing an algorithm that outputs
a tree with optimal cost while the maximum degree is at most $b+1$.

In the \textit{degree-bounded survivable network design} problem, a number
$d_i$ is associated with each demand $(s_i, t_i)$. The solution subgraph should
contain at least $d_i$ edge-disjoint paths between $s_i$ and $t_i$. Indeed this
problem has been shown to admit bi-criteria approximation algorithms with
constant approximation factors~(e.g. \cite{LS13},~STOC'08). We refer the reader to a recent survey in
\cite{lau2011iterative}. This problem has been recently considered in the
node-weighted variant too (see e.g. \cite{N12,EV14}). The degree-bounded
variant of several other problems such as $k$-MST and $k$-arborescence has also
been considered in the offline setting for which we refer the reader to
\cite{KKN13,BKN09} and references therein.

\subsection{Related Online Problems}
Online network design problems have attracted substantial attention in the last
decades. The online edge-weighted Steiner tree problem, in which the goal is to
find a minimum-weight subgraph connecting the demand nodes, was first
considered by Imase and Waxman~(\cite{IW91},~SIAM J. D. M.'91). They showed
that a natural greedy algorithm has a competitive ratio of $O(\log n)$, which
is optimal up to constants. This result was generalized to the online
edge-weighted Steiner forest problem by Awerbuch \textit{et
  al.}~(\cite{AAB96},~SODA'96) and Berman and Coulston~(\cite{BC97},~STOC'97).
Later on, Naor, Panigrahi, and Singh~(\cite{NPS11}),~FOCS'11) and Hajiaghayi,
Liaghat, and Panigrahi~(\cite{HLP13},~FOCS'13), designed poly-logarithmic
competitive algorithms for the more general \textit{node-weighted} variant of
Steiner connectivity problems. This line of work has been further investigated
in the prize-collecting version of the problem, in which one can ignore a
demand by paying its given penalty. Qian and Williamson~(\cite{QW11},~ICALP'11)
and Hajiaghayi, Liaghat, and Panigrahi~(\cite{HLP14},~ICALP'14) develop
algorithms with a poly-logarithmic competitive algorithms for these variants.

The online $b$-matching problem is another related problem in which
vertices have degree bounds but the objective is to maximize the size of the
solution subgraph. In the worst case model, the celebrated result of Karp
\textit{et al.}~(\cite{KVV90},~STOC'90) gives a $(1-1/e)$-competitive
algorithm. Different variants of this problem have been extensively studied in
the past decade, e.g., for the random arrival model see
\cite{FMMM09,KMT11,MY11,MSVV07}, for the full information model see
\cite{MOS12,MOZ12}, and for the prophet-inequality model see
\cite{AHL11,AHL12,AHL13}. We also refer the reader to the comprehensive
survey by Mehta~\cite{M12}.

Many of the problems mentioned above can be described with an online packing or
covering linear program. Initiated by work of Alon \textit{et
  al.}~\cite{AAABN09} on the online set cover problem, Buchbinder and Naor
developed a strong framework for solving packing/covering LPs fractionally
online. For the applications of their general framework in solving numerous
online problems, we refer the reader to the survey in \cite{BN09}. Azar
\textit{et al.}~\cite{ABFP13} generalize this method for the fractional
\textit{mixed} packing and covering LPs. In particular, they show an
application of their method for integrally solving a generalization of
capacitated set cover. Their result is a bi-criteria competitive algorithm that
violates the capacities by at most an $O(\log^2 n)$ factor while the cost of
the ouput is within $O(\log^2 n)$ factor of optimal.
We note that although the fractional variant of our problem is a special case of online mixed packing/covering LPs, we do not know of any online rounding method for Steiner connectivity problems.

\subsection{Preliminaries}
Let $G=(V, E)$ denote an undirected graph of size $n$ ($|V|=n$). For every
vertex $v\in V$, let $b_v\in \mathbb{R}^+$ denote the \textit{degree bound} of
$v$. In the \dbsf{} problem, we are given a sequence of connectivity
\textit{demands}. The $i^{th}$ demand is a pair of vertices $(s_i,t_i)$ which
we call the \textit{endpoints} of the demand. An algorithm outputs a subgraph
$H\subseteq G$ in which for every demand its endpoints are connected. The
\textit{load} of a vertex $v$ w.r.t. $H$ is defined as
$\ell_H(v)={\deg_H(v)}/{b_v}$. We may drop the subscript $H$ when it is clear
from the context. The goal is to find a subgraph $H$ that minimizes the maximum
load of a vertex. Observe that one can always find an optimal solution without
a cycle, hence the name Steiner \textit{forest}. Furthermore, without loss of
generality\footnote{One can add a dummy vertex for every vertex $v\in V$
  connected to $v$. We then interpret a demand between two vertices as a demand
  between the corresponding dummy vertices. The degree bound of a dummy vertex
  can be set to infinity. This transformation can increase the degree of any
  node by at most a factor of $2$, which does not affect our asymptotic
  results.}, we assume that the endpoints of the demands are distinct vertices
with degree one in $G$ and degree bound infinity. We denote the maximum load of
a vertex in an optimal subgraph by $\OPT=\min_{H} \max_{v} \ell_H(v)$.

The following mixed packing/covering linear program (\ref{LP:sf}) is a
relaxation for the natural integer program for \dbsf{}. Let $\Mc{S}$ denote the
collection of subsets of vertices that separate the endpoints of at least one
demand. For a set of vertices $S$, let $\delta(S)$ denote the set of edges with
exactly one endpoint in $S$. In \ref{LP:sf}, for an edge $e$, $\x(e)=1$
indicates that we include $e$ in the solution while $\x(e)=0$ indicates
otherwise. The variable $\alpha$ indicates an upper bound on the load of every
vertex. The first set of constraints ensures that the endpoints of every demand
is connected. The second set of constraints ensures that the load of a vertex is
upper bounded by $\alpha$. The program \ref{LP:dualsf} is the dual of the LP
relaxation \ref{LP:sf}.

\def\myrule{\phantom{\rule[-3.2ex]{1pt}{5.6ex}}}
\noindent
\begin{minipage}{.44\textwidth} %
\begin{align*}
\text{minimize}           &\quad \myrule\alpha \tag{{$\mathbb{P}$}}\label{LP:sf}\\
\forall S\subseteq \Mc{S} &\quad \myrule\zsum{e\in \delta\smash{(S)}}{\x(e)} \geq 1 \tag{$\y(S)$}\\
\forall v\in V            &\quad \myrule\zsum{e\in \delta\smash{(\{v\})}}{\x(e)} \leq \alpha \cdot b_v \tag{$\z(v)$} \\
                          &\quad \myrule\x(e), \alpha\in \mathbb{R}_{\geq 0} 
\end{align*}
\end{minipage} %
\hspace{0.02\textwidth}
\begin{minipage}{.53\textwidth} %
\begin{align*}
\text{maximize} &\quad  \myrule\zsum{S\in \Mc{S}}{\y(S)} \tag{{$\mathbb{D}$}}\label{LP:dualsf}\\
\forall e=(u,v) &\quad  \myrule\zsum{S: e\in \delta\smash{(S)}}{\y(S)} \leq \z(u)+\z(v) \tag{$\x(e)$}\\
                &\quad  \myrule\zsum{v}{\z(v)b_v} \leq 1 \tag{$\alpha$} \\
                &\quad  \myrule\z(v), \y(S)\in \mathbb{R}_{\geq 0}
\end{align*}
\end{minipage}
In the online variant of the problem, $G$ and the degree bounds are known in
advance, however, the demands are given one by one. Upon the arrival of the
$i^{th}$ demand, the online algorithm needs to output a subgraph $H_i$ that
satisfies all the demands seen so far. The output subgraph can only be
augmented, i.e., for every $j<i$, $H_j\subseteq H_i$. The \textit{competitive
  ratio} of an online algorithm is then defined as the worst case ratio of
${\max_v \ell_{H}(v)}/{\OPT}$ over all possible demand sequences where
$H$ is the final output of the algorithm.


\section{Online Degree-Bounded Steiner Forest}\label{sec:sfalg}
Consider an arbitrary online step where a new demand $(s,t)$ arrived. Let
$H$ denote the online output of the previous step.
In order to augment $H$ for connecting $s$ and $t$, one can shortcut through
the connected components of $H$. We say an edge $e=(u,v)$ is an
\textit{extension edge} w.r.t. $H$, if $u$ and $v$ are not connected in $H$. Let
$P$ denote an $(s,t)$-path in $G$. The extension part $P^*$ of $P$ is defined
as the set of extension edges of $P$. Observe that augmenting $H$ with $P^*$
satisfies the demand $(s,t)$. For a vertex $v$, we define
$\ell^+_H(v)=\ell_H(v)+{2}/{b_v}$ to be the \textit{uptick load of $v$}. 
We slightly misuse the notation by defining $\ell^+_H(P^*)=\max_{v \in
  V(P^*)} \ell^+_H(v)$ as the uptick load of $P^*$, where $V(P^*)$ is the set of
endpoints of edges in $P^*$.

The \textit{greedy algorithm} (GA) starts with an empty subset $H$. Upon 
arrival of the $i$-{th} demand $(s_i,t_i)$, we find a path $P_i$ with 
smallest uptick load $\ell^+_H(P_i^*)$ where $P_i^*$ is the extension part of
$P_i$. We break ties in favor of the path with a smaller number of edges.
Note that the path $P_i$ can be found efficiently using Dijkstra-like
algorithms. We define $\tau_i=\ell^+_H(P_i^*)$ to be the \textit{arrival threshold}
of the $i$-{th} demand. We satisfy the new demand by adding $P_i^*$ to the edge
set of $H$ (see Algorithm~\ref{alg:dbsf}).

\subsection{Analysis}
We now use a dual-fitting approach to show that GA has an
asymptotically tight competitive ratio of $O(\log(n))$. 
In the following we  use \alg{} to also refer to the \emph{final output}
of our greedy algorithm.
We first show several structural properties of
\alg{}. We then use these combinatorial properties to construct a family of
feasible dual vectors for \ref{LP:dualsf}. We finally show that there always
exists a member of this family with an objective value of at least a
${1}/{O(\log(n))}$ fraction of the maximum load of a vertex in \alg{}. This
in turn proves the desired competitive ratio by using weak duality.

For a real value $r\geq 0$, let $\Gamma(r)$ denote the set of vertices with
$\ell^+_{\alg}(v)\geq r$. Let $D(r)$ denote the indices of demands $i$ for
which the arrival threshold $\tau_i$ is at least $r$. For a subgraph $X$, let
$\CC(X)$ denote the collection of connected components of $X$. For ease of
notation, we may use the name of a subgraph to denote the set of vertices of
that subgraph as well. Furthermore, for a graph $X$ and a subgraph $Y\subseteq
X$, let $X\setminus Y$ denote the graph obtained from $X$ by removing the
vertices of $Y$.

Recall that $\Mc{S}$ is the collection of subsets that separate the endpoints
of at least one demand. The following lemma, intuitively speaking, implies that
$\Gamma(r)$ well-separates the endpoints of $D(r)$.

\begin{lemma}\label{lem:multiwaycut}
For any threshold $r> 0$, we have
$\left| \CC(G\setminus\Gamma(r)) \cap \Mc{S}  \right| \geq |D(r)|+1 $.
\end{lemma}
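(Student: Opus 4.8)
The plan is to produce an auxiliary forest whose vertices are components of $G\setminus\Gamma(r)$ that lie in $\Mc{S}$ and whose edges are indexed by $D(r)$; the bound then drops out of the elementary relation $|V|=|E|+(\#\text{components})$ for a forest. Throughout, \enquote*{at time $i$} refers to the state $H$ of \alg{}'s solution just before the $i$-th demand is processed, and I will repeatedly use that uptick load is monotone along the run (degrees only increase), so $\ell^+_H(v)\le\ell^+_{\alg}(v)$ for every $H$ arising during the run, and that components of $H$ only merge over time. I also assume $D(r)\neq\emptyset$, since this is the only case used in the sequel.

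The first step is the separation property: for every $i\in D(r)$ the endpoints $s_i,t_i$ lie in \emph{distinct} connected components of $G\setminus\Gamma(r)$. Note first that $s_i,t_i\notin\Gamma(r)$, because demand endpoints have degree bound $\infty$ and hence uptick load $0<r$. Now if $s_i$ and $t_i$ were joined by a path $Q$ in $G$ avoiding $\Gamma(r)$, then at time $i$ the extension part $Q^*$ of $Q$ would have all of its vertices outside $\Gamma(r)$, so $\ell^+_H(Q^*)<r$ and therefore $\tau_i\le\ell^+_H(Q^*)<r$, contradicting $i\in D(r)$. In particular, the component of $G\setminus\Gamma(r)$ containing $s_i$ (resp.\ $t_i$) contains exactly one endpoint of demand $i$, hence belongs to $\Mc{S}$.

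Next I build a multigraph $\mathcal{H}$ whose vertex set is the collection of components of $G\setminus\Gamma(r)$ containing an endpoint of some $i\in D(r)$, with one edge for each $i\in D(r)$ joining the component of $s_i$ to the component of $t_i$ (a genuine edge, by the previous step, so in particular no self-loops). By the separation property every vertex of $\mathcal{H}$ lies in $\Mc{S}$, so $\bigl|\CC(G\setminus\Gamma(r))\cap\Mc{S}\bigr|\ge |V(\mathcal{H})|$; since $\mathcal{H}$ has exactly $|D(r)|$ edges and no isolated vertices, it suffices to show $\mathcal{H}$ is a forest, for then $|V(\mathcal{H})|=|D(r)|+(\#\text{components of }\mathcal{H})\ge|D(r)|+1$.

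The heart of the proof, and the step I expect to be the main obstacle, is showing $\mathcal{H}$ is acyclic. Suppose it has a cycle $C_1,\dots,C_\ell,C_1$ with $\ell\ge 2$ (the case $\ell=2$ being a pair of parallel edges), realized by demands $i_1,\dots,i_\ell\in D(r)$, and let $i_j$ be the one that arrives last. I will exhibit an $(s_{i_j},t_{i_j})$-path whose extension part at time $i_j$ avoids $\Gamma(r)$, contradicting $\tau_{i_j}\ge r$. Build a walk from $s_{i_j}$ to $t_{i_j}$ in $G$ that detours around the cycle the other way: inside each component $C_k$ visited it moves freely (every such edge has both endpoints in $C_k$, hence outside $\Gamma(r)$), and it crosses from one component to the next using a demand $i_k$ with $k\neq j$ — which has already been satisfied by time $i_j$, so the endpoints of $i_k$ are joined in $H$ and we may splice in that $H$-path (all of whose edges are $H$-edges, hence not extension edges). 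Extract a simple path $P'$ from this walk; every edge of $P'$ is either from a within-component portion (both endpoints outside $\Gamma(r)$) or from an $H$-connector (never an extension edge), so the extension part of $P'$ lies entirely outside $\Gamma(r)$ and has uptick load $<r$ at time $i_j$ — the desired contradiction. The care needed here is precisely the bookkeeping of which edges of the constructed path are extension edges, which is where the monotonicity of the components of $H$ over time is used; modulo that, $\mathcal{H}$ is a forest and the lemma follows.
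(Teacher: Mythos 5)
Your proof is correct and follows essentially the same route as the paper's: an auxiliary graph on the components of $G\setminus\Gamma(r)$ with one edge per demand in $D(r)$, shown to be acyclic by taking the last-arriving demand on a putative cycle and splicing within-component paths with $H$-paths of earlier demands to get an $(s,t)$-path whose extension part avoids $\Gamma(r)$, contradicting $\tau_i\ge r$. Your additional bookkeeping (the explicit separation/no-self-loop step, membership in $\Mc{S}$, monotonicity of uptick loads, and flagging the $D(r)\neq\emptyset$ assumption) only makes explicit what the paper leaves implicit.
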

\begin{proof}
  For a vertex $v\in G\setminus\Gamma(r)$ we use $\operatorname{CC}(v)$ to
  denote its connected component. Observe that, since the endpoints of demands
  are nodes with infinite degree bound, the endpoints are \emph{not} contained in $\Gamma(r)$, and, hence, are
  in $G\setminus\Gamma(r)$.

  We construct a graph $F$ that has one node for every connected component in
  $G\setminus\Gamma(r)$ that contains an endpoint of a demand in $D(r)$. Edges
  in $F$ are defined as follows. For every demand $i\in D(r)$ between $s_i$ and
  $t_i$, we add an edge that connects the components $\operatorname{CC}(s_i)$
  and $\operatorname{CC}(t_i)$. In the following we argue that $F$ does not
  contain cycles. This gives the lemma since in a forest $|D(r)|+1=|E_F|+1\le|V_F|\le 
  |\CC(G\setminus\Gamma(r)) \cap \Mc{S}|$.

  Assume for contradiction that the sequence $(e_{i_0},\dots,e_{i_{k-1}})$,
  $k\ge 2$ forms a (minimal) cycle in $F$, where $e_{i_j}$ corresponds to the
  demand between vertices $s_{i_j}$ and $t_{i_j}$ (see Figure~\ref{fig:lemsep}
  in the Appendix). Assume wlog.\ that $e_{i_{k-1}}$ is the edge of the cycle for
  which the corresponding demand appears last, i.e., $i_{k-1}\geq i_j$ for
  every $j<k$. Let $H$ denote the online solution at the time of arrival of the
  demand $i_{k-1}$. We can augment $H$ to connect each $t_{i_j}$ to
  $s_{i_{j+1\bmod k}}$, $0\le j\le k-1$ without using any node from
  $\Gamma(r)$, since these nodes are in the same component in
  $G\setminus\Gamma(r)$. But then we have a path $P$ between $s_{i_{k-1}}$ and
  $t_{i_{k-1}}$ and the extension part $P^*$ does not contain any vertex from
  $\Gamma(r)$. This is a contradiction since the arrival threshold for the
  demand $i_{k-1}$ is at least $r$.
\end{proof}

Lemma~\ref{lem:multiwaycut} shows that cutting $\Gamma(r)$ highly disconnects
the demands in $D(r)$. Indeed this implies a bound on the \textit{toughness} of
the graph. Toughness, first defined by Chv{\'a}tal~\cite{C73} and later
generalized by Agrawal \textit{et al.}~\cite{AKR91b}, is a measure of how easy
it is to disconnect a graph into many components by removing a few vertices.
More formally, the toughness of a graph is defined as $\min_{X\subseteq V}
\frac{|X|}{|\CC(G\setminus X)|}$. For the spanning tree variant of the problem,
it is not hard to see that $\OPT$ is at least the reciprocal of toughness.
Although it is more involved, we can still establish a similar relation for the
non-uniform Steiner forest problem (see Lemma~\ref{lem:fitting}). However,
first we need to show a lower bound on the number of demands separated by
$\Gamma(r)$.


We establish a lower bound for $|D(r)|$ with respect to the load of vertices in
$\Gamma(r)$.
For any $r,b> 0$ we define $\Gamma_b(r):=\{\ell_{\alg}^+(v)\ge r\wedge b_v\ge
b\}$, as the set of nodes with degree bound at least $b$ that have uptick load at least
$r$ in the final online solution. We further define
\begin{equation*}
\operatorname{excess}(r,b)=\sum_{v\in
  \Gamma_b(r)}\!\!\Big(\operatorname{deg}_{\alg}(v)-\lceil r b_v\rceil+2\Big)\enspace,
\end{equation*}
which sums the (absolute) load that nodes in $\Gamma_b(r)$ receive \emph{after}
their \emph{uptick load} became $r$ or larger. The following lemma relates $|D(r)|$ to 
$\operatorname{excess}(r,b)$.
\begin{lemma} \label{lem:largedemand}
For any $r,b>0$, 
$\operatorname{excess}(r,b)\le 2|D(r)|+3|\Gamma_b(v)|$.
\end{lemma}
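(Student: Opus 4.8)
For each $v\in\Gamma_b(r)$ let $j(v)$ be the index of the demand during which the uptick load $\ell^+_H(v)$ first becomes $\ge r$ (set $j(v)=0$ when $2/b_v\ge r$, so that $v$ is already heavy with respect to the empty solution), and let $g_v$ be the number of edges incident to $v$ that are added while serving demands with index strictly larger than $j(v)$. I would first do the arithmetic that collapses the summand $\deg_\alg(v)-\lceil r b_v\rceil+2$ to $g_v+1$: just before demand $j(v)$ the degree $d^-$ of $v$ satisfies $d^-+2<r b_v$, hence $d^-\le\lceil r b_v\rceil-3$ by integrality; during demand $j(v)$ the simple path $P_{j(v)}$ adds at most $2$ edges at $v$; so the degree of $v$ right after demand $j(v)$ is at most $\lceil r b_v\rceil-1$, and therefore $\deg_\alg(v)-\lceil r b_v\rceil+2\le g_v+1$. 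Summing over $v\in\Gamma_b(r)$ gives $\operatorname{excess}(r,b)\le|\Gamma_b(r)|+\sum_{v\in\Gamma_b(r)}g_v$, so it suffices to prove $\sum_v g_v\le 2|D(r)|+2|\Gamma_b(r)|$.

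The next step is to charge the edges counted by $g_v$ to demands in $D(r)$. If an edge at $v$ is added while serving demand $i>j(v)$, then $v\in V(P_i^*)$; since $\ell^+(v)$ is non-decreasing and is at least $r$ from step $j(v)$ on, we get $\tau_i=\ell^+_{H_i}(P_i^*)\ge\ell^+_{H_i}(v)\ge r$, i.e.\ $i\in D(r)$. Thus $\sum_v g_v=\sum_{i\in D(r)}\sigma_i$, where $\sigma_i:=\sum_v \deg_{P_i^*}(v)$ and the sum runs over the vertices $v\in V(P_i^*)$ with $j(v)<i$ (the ``old-heavy'' vertices of demand $i$).

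Now I would bound $\sigma_i$ per demand using the component structure of $H_i$, the solution before demand $i$. One may assume $P_i$ visits each connected component of $H_i$ at most once --- by the tie-breaking rule, or, equivalently, because a revisiting path carries a redundant extension edge and $\alg$ can be kept a forest. Then $P_i$ passes through pairwise distinct components $D_1,\dots,D_{m_i}$ ``in a line'', $|P_i^*|=m_i-1$, and every old-heavy vertex of $V(P_i^*)$ is an interface vertex of exactly one $D_j$; the total $P_i^*$-degree of the interface vertices lying in $D_j$ equals $1$ for $j\in\{1,m_i\}$ and $2$ for an interior $j$. Hence $\sigma_i\le 2\bar a_i$, where $\bar a_i$ is the number of the $D_j$ that contain an old-heavy vertex of $V(P_i^*)$; note that $\bar a_i\ge 1$ already forces $i\in D(r)$.

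Finally I would telescope over time. Call a component of the current solution \emph{heavy} if it contains a currently-heavy vertex of $\Gamma_b(r)$. This count starts at most $|\Gamma_b(r)|$ (default-heavy singletons), it increases by $1$ only when some vertex of $\Gamma_b(r)$ becomes heavy inside a not-yet-heavy component, and --- the crux --- serving a demand with $\bar a_i\ge 1$ fuses the $\ge\bar a_i$ heavy components among $D_1,\dots,D_{m_i}$ into a single one, lowering the count by at least $\bar a_i-1$. Since the count is never negative and the total number of increases (including the initial singletons) is at most $|\Gamma_b(r)|$, we obtain $\sum_{i:\bar a_i\ge 1}(\bar a_i-1)\le|\Gamma_b(r)|$, hence $\sum_i\bar a_i\le|D(r)|+|\Gamma_b(r)|$, hence $\sum_v g_v=\sum_i\sigma_i\le 2\sum_i\bar a_i\le 2|D(r)|+2|\Gamma_b(r)|$, which with the first paragraph gives the claim. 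The arithmetic of the first paragraph and the monotonicity of the second are routine; the main obstacle is making the per-demand reduction of the third paragraph watertight --- pinning down the ``each component visited once'' normalization and handling singleton components that are themselves heavy --- together with being careful in the last paragraph not to double-count a default-heavy singleton against a later newly-heavy vertex, since that bookkeeping is exactly what keeps the additive term at $3|\Gamma_b(r)|$ rather than a larger multiple.
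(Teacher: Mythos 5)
Your proposal is correct and, despite the per-vertex reorganization, takes essentially the same route as the paper: the $+1$-per-vertex accounting of the threshold crossing, the at-most-two-edges-per-visited-component bound from the tie-breaking rule, and the merging/potential argument on components containing heavy vertices (your $\sum_i(\bar a_i-1)\le|\Gamma_b(r)|$ is the paper's bound on the total decrease in the number of components meeting $\Gamma_b(r)$), yielding the same $2|D(r)|+3|\Gamma_b(r)|$ bound.
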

\begin{proof}
  Consider some online step $i$.
  Let $H$ denote the output of the previous step. Let $P^*_i$ be the extension
  part of the path selected by GA and let $V(P^*_i)$ denote the endpoints of
  edges of $P^*_i$. Since in GA we break ties in favor of the path with the
  smaller number of edges, we can assume that the path does not go through a
  connected component of $H$ more than once, i.e., for every $C\in \CC(H)$,
  $|V(P^*_i)\cap C| \leq 2$.

  Consider the variable quantity $\delta(r,b):=\sum_{v: \ell^+_H(v)\geq r
    \wedge b_v\geq b}(\deg_H(v)-\lceil r b_v \rceil+2)$ throughout the steps of GA
  where $H$ denotes the output of the algorithm at every step. Intuitively, at
  each step $\delta(r,b)$ denotes the total increment in degrees of those
  vertices in $\Gamma_b(r)$ that already reached uptick load $r$. In
  particular, at the end of the run of GA,
  $\delta(r,b)=\operatorname{excess}(r,b)$.

  Now suppose at step $i$ adding the edges $P^*_i$ to $H$ increases
  $\delta(r,b)$ by $q_i$. There are two reasons for such an increase. On the
  one hand, if the demand $i$ is from $D(r)$ it may simply increase 
  $\operatorname{deg}_H(v)$ for some node with uptick load at least $r$. On the
  other hand also if the demand is not from $D(r)$ it may cause a node to
  increase its uptick load to $r$ in which case it could contribute to the
  above sum with at most 1 (in a step the degree may increase by 2; the first
  increase by 1 could get the uptick load to $\ge r$ and the second increase
  contributes to the sum). 

  Clearly increases of the second type can accumulate to at most
  $|\Gamma_b(r)|$. In order to derive a bound on the first type of increases
  recall that we assume that endpoints of demands are distinct vertices with
  degree one and thus $s_i$ and $t_i$ are outside any connected component of
  $H$. Since $V(P^*_i)$ contains at most two vertices of a connected component
  of $H$, we can assert that the path selected by GA is passing through at
  least $q_i/2$ components of $H$ that contain some vertices of $\Gamma_b(r)$.
  Hence, after adding $P^*_i$ to the solution, the number of connected
  components of the solution that contain some vertices of $\Gamma_b(r)$
  decreases by at least $(q_i-2)/2$. However, throughout all the steps, the
  total amount of such decrements cannot be more than the number of vertices in
  $\Gamma_b(r)$.
%
%
%
Therefore
\begin{align*}
\operatorname{excess}(r,b)
	&=\sum_{i\in D(r)} q_i+\sum_{i\notin D(r)} q_i\\
	&= 2 |D(r)| + \sum_{i\in D(r)} (q_i-2) +|\Gamma_b(r)|\\
    &\le 2|D(r)| +3|\Gamma_b(r)|\enspace,
\end{align*}
and the lemma follows.
\end{proof}

Let $\Delta>0$ denote the minimum degree bound of a vertex with non-zero degree in
an optimal solution. Note that this definition implies that $\OPT\ge1/\Delta$. For a set of vertices $\Gamma$, let $b(\Gamma)=\sum_{v\in \Gamma} b_v$.
We now describe the main dual-fitting argument for
bounding the maximum load in \alg{}. 

\begin{lemma} \label{lem:fitting}
	For any $r>0$, ${|D(r)|}/{b(\Gamma_{\Delta}(r))}\leq \OPT$.
\end{lemma}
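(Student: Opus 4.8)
The plan is to exhibit an explicit feasible solution to the dual LP \ref{LP:dualsf} whose objective value is exactly $|D(r)|/b(\Gamma_\Delta(r))$; weak duality (the dual objective is at most $\OPT$, the optimum of \ref{LP:sf}) then gives the claim. The natural choice is to put all the dual weight on the connected components of $G\setminus\Gamma_\Delta(r)$ that are separating sets for demands in $D(r)$, and to put $\z$-mass on the vertices of $\Gamma_\Delta(r)$. Concretely, set $\z(v)=1/b(\Gamma_\Delta(r))$ for every $v\in\Gamma_\Delta(r)$ and $\z(v)=0$ otherwise; this immediately satisfies the budget constraint $\sum_v \z(v)b_v\le 1$ with equality. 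Then I would use Lemma~\ref{lem:multiwaycut} (applied with the cut $\Gamma_\Delta(r)$ rather than $\Gamma(r)$, which requires re-examining the argument) to find a forest $F$ on the components of $G\setminus\Gamma_\Delta(r)$ with at least $|D(r)|$ edges, and orient/select $|D(r)|$ of those components so that I can place $\y$-mass on an appropriate family of $\Mc{S}$-sets.

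The key step is choosing which sets $S\in\Mc{S}$ receive positive $\y(S)$ and how much. Following the forest structure from Lemma~\ref{lem:multiwaycut}: pick a spanning forest of $F$ with $\ge|D(r)|$ edges, root each tree arbitrarily, and for each non-root node associate the component $C$ it represents; assign $\y(C)=1/b(\Gamma_\Delta(r))$ to each such component (viewed as a vertex set in $\Mc{S}$, which it is since it contains an endpoint of a demand in $D(r)$). This yields dual objective $\sum_S \y(S)=|D(r)|/b(\Gamma_\Delta(r))$ as desired. It remains to verify the edge constraint $\sum_{S:\,e\in\delta(S)}\y(S)\le \z(u)+\z(v)$ for every $e=(u,v)$. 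Since each $S$ used is a full connected component $C$ of $G\setminus\Gamma_\Delta(r)$, any edge $e\in\delta(C)$ has at least one endpoint in $\Gamma_\Delta(r)$ (an edge leaving a component of $G\setminus\Gamma_\Delta(r)$ must hit $\Gamma_\Delta(r)$). A single edge $e$ with one endpoint in $\Gamma_\Delta(r)$ lies in $\delta(C)$ for exactly one such component $C$, so the left side is at most $1/b(\Gamma_\Delta(r))=\z(u)+\z(v)$ (or $2/b(\Gamma_\Delta(r))$ if both endpoints are in $\Gamma_\Delta(r)$, which is even easier). That is the whole verification.

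The main obstacle I anticipate is that Lemma~\ref{lem:multiwaycut} is stated for $\Gamma(r)$, not for $\Gamma_\Delta(r)=\{v:\ell^+_\alg(v)\ge r \wedge b_v\ge\Delta\}$, and I need the separation statement for the smaller set $\Gamma_\Delta(r)$. Removing fewer vertices can only merge components, potentially destroying the "$F$ is a forest" argument — but in fact the acyclicity proof there only used that endpoints of demands avoid the removed set and that the online path for the last demand in a hypothetical cycle would avoid it; re-running that argument, I'd need that the extension path GA picked at arrival of demand $i\in D(r)$ with $\tau_i\ge r$ must pass through some vertex of $\Gamma_\Delta(r)$, not just $\Gamma(r)$. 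This should follow because any vertex $v$ on the extension part with $b_v<\Delta$ would be a vertex of non-zero degree with degree bound below $\Delta$ in GA's output; but $\Delta$ was defined via the \emph{optimal} solution, not GA, so this needs care — possibly one restricts attention to demands whose bottleneck vertex has large degree bound, or argues that GA never routes through a low-$b$ vertex when a better option (guaranteed by OPT's structure) exists. Handling this interplay between GA's choices and the parameter $\Delta$ is where the real work lies; the dual-fitting arithmetic itself is routine once the right separating family is in hand.
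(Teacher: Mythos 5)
Your dual vector and its feasibility check are exactly right (and essentially identical to the paper's), but the counting step is where the proof genuinely breaks, and the direction you sketch for repairing it cannot work. You need at least $|D(r)|$ components of $G\setminus\Gamma_\Delta(r)$ lying in $\Mc{S}$, i.e.\ you need the forest argument of Lemma~\ref{lem:multiwaycut} to survive with the smaller cut set $\Gamma_\Delta(r)$. That would require that for every demand $i\in D(r)$, every $(s_i,t_i)$-path available at its arrival has an extension part meeting $\Gamma_\Delta(r)$. This is false in general: $\tau_i\ge r$ only guarantees a blocking vertex of uptick load at least $r$, and that vertex may have $b_v<\Delta$. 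GA can legitimately route through vertices of degree bound below $\Delta$, since $\Delta$ is defined from an optimal solution the algorithm never sees (an unused vertex with $b_v$ just below $\Delta$ has uptick load only $2/b_v$ and may well be the greedy choice). In that case $s_i$ and $t_i$ can end up in the \emph{same} component of $G\setminus\Gamma_\Delta(r)$, the corresponding edge of $F$ becomes a self-loop, and the bound $|V_F|\ge |D(r)|+1$ collapses. So the step you defer as ``where the real work lies'' is not a technicality, and as you pose it (a property of GA's paths relative to $\Delta$) it cannot be established.

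The paper closes the gap by changing the LP rather than the analysis of GA: let $G_\Delta$ be $G$ with all vertices of degree bound below $\Delta$ deleted, and fit the dual to the program \ref{LP:dualsf} restricted to $G_\Delta$ (with the corresponding family $\Mc{S}_\Delta$). Since, by definition of $\Delta$, an optimal subgraph uses no deleted vertex, the restricted primal still has value at most $\OPT$, so a feasible dual for the restricted program lower-bounds $\OPT$. In $G_\Delta$ the problematic low-degree-bound detours simply do not exist, and the count follows from Lemma~\ref{lem:multiwaycut} as stated: every component of $G\setminus\Gamma(r)$ in $\Mc{S}$ contains an endpoint $t_i$ of a demand it separates; endpoints have infinite degree bound, so they survive the deletion, and the (smaller) component of $t_i$ in $G_\Delta\setminus\Gamma_\Delta(r)$ still separates demand $i$. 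This gives an injection showing $|\CC(G_\Delta\setminus\Gamma_\Delta(r))\cap\Mc{S}_\Delta|\ge|\CC(G\setminus\Gamma(r))\cap\Mc{S}|\ge |D(r)|+1$. Your dual assignment and edge-constraint verification then go through verbatim on $G_\Delta$, where again every edge leaving a component of $G_\Delta\setminus\Gamma_\Delta(r)$ has its other endpoint in $\Gamma_\Delta(r)$.
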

\begin{proof}
  Let $G_{\Delta}$ denote the graph obtained from $G$ by removing vertices with
  degree bound below $\Delta$. Similarly, let $\Mc{S}_{\Delta}$ denote the
  collection of sets that separate a demand in $G_{\Delta}$. Consider a
  slightly modified dual program $\text{\ref{LP:dualsf}}_{\Delta}$ defined on
  $G_{\Delta}$ and $\Mc{S}_{\Delta}$, i.e., we obtain
  $\text{\ref{LP:dualsf}}_{\Delta}$ from the original dual program by removing
  all vertices with degree bound below $\Delta$. We note that the objective
  value of a dual feasible solution for $\text{\ref{LP:dualsf}}_{\Delta}$ is
  still a lower bound for $\OPT$. In the remainder of the proof, we may refer
  to $\text{\ref{LP:dualsf}}_{\Delta}$ as the dual program.
  Recall that in a feasible dual solution, we need to define a dual value
  $\y(S)$ for every cut $S\in \Mc{S}_{\Delta}$ such that for every edge
  $e=(u,v)$ the total value of cuts that contain $e$ does not exceed
  $\z(u)+\z(v)$. On the other hand, $\sum_v \z(v) b_v$ cannot be more than one.

  For a
  real value $r> 0$, we construct a dual vector $(\y_r,\z_r)$ as follows.
  For every $v\in \Gamma_{\Delta}(r)$, set
  $\z_r(v)={1}/{b(\Gamma_{\Delta}(r))}$; for other vertices set
  $\z_r(v)=0$. For every component $S\in \CC(G_{\Delta}\setminus
  \Gamma_{\Delta}(r)) \cap \Mc{S}_{\Delta}$, set
  $\y_r(S)={1}/{b(\Gamma_{\Delta}(r))}$; for all other sets set
  $\y_r(S)=0$. We prove the lemma by showing the feasibility of the dual vector
  $(\y_r,\z_r)$ for $\text{\ref{LP:dualsf}}_{\Delta}$.

  Consider an arbitrary component $C\in \CC(G\setminus \Gamma(r))$. By
  definition, $C$ separates at least one demand. Let $i$ be such a demand and
  let $t_i\in C$ denote an endpoint of it. Removing vertices with degree bound
  below $1/\Delta$ from $C$ may break $C$ into multiple smaller components.
  However, an endpoint of a demand has degree bound infinity, and, hence, the
  component that contains $t_i$ belongs to $\Mc{S}_\Delta$. Therefore
  $|\CC(G\setminus \Gamma(r)) \cap \Mc{S}|\leq |\CC(G_{\Delta}\setminus
  \Gamma_{\Delta}(r)) \cap \Mc{S}_{\Delta}|$; which by
  Lemma~\ref{lem:multiwaycut} leads to $|\CC(G_{\Delta}\setminus
  \Gamma_{\Delta}(r)) \cap \Mc{S}_{\Delta}| \geq |D(r)|+1$. Therefore the dual
  objective for the vector $(\y_r,\z_r)$ is at least

  \[ \sum_{S\in \Mc{S}_{\Delta}} \y_r(S) = \sum_{S\in \CC(G_{\Delta}\setminus
    \Gamma_{\Delta}(r)) \cap \Mc{S}_{\Delta}} \frac{1}{b(\Gamma_{\Delta}(r))}
  \geq \frac{|D(r)|+1}{b(\Gamma_{\Delta}(r))} \enspace .\]
  Thus we only need to show that $(\y_r,\z_r)$ is feasible for
  Program~$\text{\ref{LP:dualsf}}_{\Delta}$. First, by construction we have
  \[ \sum_v \z_r(v) b_v = \sum_{v\in \Gamma_{\Delta}(r)}
  \frac{1}{b(\Gamma_{\Delta}(r))} \cdot b_v =1 \enspace . \]
  Now consider an arbitrary edge $e=(u,v)$. If $e$ does not cross any of the
  components in $\CC(G_{\Delta}\setminus \Gamma_{\Delta}(r))$, then $\sum_{S:
    e\in \delta(S)} \y_r(S)=0$ and we are done. Otherwise, $\sum_{S: e\in
    \delta(S)} \y_r(S)={1}/{b(\Gamma_{\Delta}(r))}$. However, exactly one
  endpoint of $e$ is in $\Gamma_{\Delta}(r)$. Thus
  $\z_r(u)+\z_r(v)={1}/{b(\Gamma_{\Delta}(r))}$, which implies that the
  dual vector is feasible.
\end{proof}

\noindent
We now have all the ingredients to prove the main theorem.
\smallskip

\begin{proof}[of Theorem~\ref{thm:mainlog}]
  Let \alg{} denote the final output of the greedy algorithm. Let $h$ denote
  the maximum load of a vertex in $\alg$, i.e., $h=\max_v \ell_{\alg}(v)$.
  Furthermore, let $h_{\Delta}=\max_{v: b_v\geq \Delta} \ell_{\alg}(v)$. 
In the following we first show that $h_{\Delta}\le O(\log n)\cdot\OPT$.
For this we use the folowing claim.


\begin{claim}
There exists an $r>0$ such that 
\begin{equation*}
\operatorname{excess}(r,\Delta)\ge
\frac{h_\Delta-1/\Delta}{4\log_2n+6}\cdot b(\Gamma_\Delta(r))-|\Gamma_\Delta(r)|\enspace.
\end{equation*}
\end{claim}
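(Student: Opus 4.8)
The plan is to produce a suitable $r$ by an averaging (pigeonhole) argument over the single real parameter $r$, after first shrinking the range of $r$ to a window on which the cut weight $b(\Gamma_\Delta(r))$ varies by only a $\mathrm{poly}(n)$ factor. Before that, dispose of the case $h_\Delta\le 1/\Delta$: there the right-hand side of the claimed inequality is $\le 0$, while $\operatorname{excess}(r,\Delta)\ge 0$ for every $r$ (since $v\in\Gamma_\Delta(r)$ forces $\deg_{\alg}(v)+2\ge rb_v$ and $\deg_{\alg}(v)$ is an integer, so $\deg_{\alg}(v)\ge\lceil rb_v\rceil-2$), so any $r$ works. Hence assume $h_\Delta>1/\Delta$ and fix a vertex $v^\star$ with $b_{v^\star}\ge\Delta$ and $\ell_{\alg}(v^\star)=h_\Delta$; note $\ell^+_{\alg}(v^\star)=h_\Delta+2/b_{v^\star}>h_\Delta$.

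The window will be $r\in(1/\Delta,h_\Delta]$. On it, $W(\cdot):=b(\Gamma_\Delta(\cdot))$ is nonincreasing and satisfies $\Delta\le W(r)=O(\Delta\cdot n^2)$: the lower bound holds because $v^\star\in\Gamma_\Delta(r)$ for every such $r$ (as $\ell^+_{\alg}(v^\star)>h_\Delta\ge r$); the upper bound holds because $v\in\Gamma_\Delta(r)$ with $r>1/\Delta$ forces $b_v<\Delta(\deg_{\alg}(v)+2)$, and $\sum_v\deg_{\alg}(v)=2|E(\alg)|\le 2|E(G)|=O(n^2)$ (in fact $\alg$ is a forest, so $O(n)$). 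Thus $W(r)/\Delta$ lies in an interval $[1,N)$ with $N=O(n^2)$.

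Now partition $(1/\Delta,h_\Delta]$ into the at most $k:=\lceil\log_2 N\rceil\le 2\log_2 n+O(1)$ layers $J_\ell=\{r:2^{\ell-1}\le W(r)/\Delta<2^\ell\}$; by monotonicity of $W$ each $J_\ell$ is a subinterval, so the widest one, $J_{\ell^\star}=(a,b]$, has length $b-a\ge(h_\Delta-1/\Delta)/k$. Take $r$ slightly larger than $a$, so $r\in J_{\ell^\star}$. Three observations about this $r$: (i) $\lceil x\rceil\le x+1$ together with $\ell^+_{\alg}(v)b_v=\deg_{\alg}(v)+2$ gives $\operatorname{excess}(r,\Delta)\ge\sum_{v\in\Gamma_\Delta(r)}\big((\ell^+_{\alg}(v)-r)b_v-1\big)=\int_r^{\infty}W(s)\,ds-|\Gamma_\Delta(r)|$; (ii) $\int_r^{\infty}W(s)\,ds\ge(b-r)W(b)$ by monotonicity; (iii) $r,b\in J_{\ell^\star}$ give $W(r)<2^{\ell^\star}\Delta\le 2W(b)$. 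Chaining these, $\operatorname{excess}(r,\Delta)+|\Gamma_\Delta(r)|\ge(b-r)W(b)>\tfrac12(b-r)W(r)\ge\tfrac{h_\Delta-1/\Delta}{4\log_2 n+6}\,b(\Gamma_\Delta(r))$, where the last step holds once $r$ is close enough to $a$, using $b-a\ge(h_\Delta-1/\Delta)/k$ and $2k<4\log_2 n+6$. Rearranging proves the claim with this $r$.

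The one genuine obstacle is the window choice: for $r$ below $1/\Delta$ the weight $b(\Gamma_\Delta(r))$ can be as large as $\sum_v b_v$, which is not polynomially bounded, so the layer count would be uncontrolled; the point is that $\OPT\ge 1/\Delta$ makes $r\le 1/\Delta$ irrelevant and, above $1/\Delta$, every cut vertex has $b_v=O(\Delta\deg_{\alg}(v))$, taming $W$. Matching the exact constant $4\log_2 n+6$ is only bookkeeping (one can afford a crude bound on $k$ and a crude ``close enough''). Equivalently, one may skip the discretization: with $F(r):=\int_r^\infty W$ one has $F>0$ and $F'=-W$ a.e.\ on $(1/\Delta,h_\Delta]$, $F(h_\Delta)\ge(\ell^+_{\alg}(v^\star)-h_\Delta)b_{v^\star}=2$, and $F(1/\Delta)=O(n)$, so $\int_{1/\Delta}^{h_\Delta}(\ln F)'=\ln\frac{F(h_\Delta)}{F(1/\Delta)}>-(4\log_2 n+6)=-(h_\Delta-1/\Delta)/c$; hence $(\ln F)'(r)>-1/c$, i.e.\ $F(r)>c\,W(r)$, on a set of positive measure, and any such $r$ works after the same rearrangement.
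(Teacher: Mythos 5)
Your proposal is correct and follows essentially the same route as the paper: bound $b(\Gamma_\Delta(r))$ between $\Delta$ and $\mathrm{poly}(n)\cdot\Delta$ on the window $r\in[1/\Delta,h_\Delta]$, split the window into $O(\log n)$ dyadic layers of the cut weight, pick the widest layer, and lower-bound the excess at its left end by (layer width)$\times$(cut weight) minus $|\Gamma_\Delta(r)|$ using $\lceil x\rceil\le x+1$ and the factor-$2$ relation inside a layer. Your integral (layer-cake) formulation of the excess and the explicit treatment of the degenerate case $h_\Delta\le 1/\Delta$ are only cosmetic variants of the paper's per-node telescoping; the remaining constant-matching is indeed just bookkeeping.
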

\begin{proof}
  Recall that $\Gamma_{\Delta}(r)$ is the set of vertices with uptick load at
  least $r$ in \alg{} and degree bound at least $\Delta$.
  The function $b(\Gamma_\Delta(r))$ is non-increasing with $r$ since for every
  $r_1<r_2$, $\Gamma_{\Delta}(r_2)\subseteq \Gamma_{\Delta}(r_1)$. For
  $r=1/\Delta$, $b(\Gamma_\Delta(r))\le n(n+1)\Delta$ as a node with  degree bound
  $b_v> (n+1)\Delta$ will have uptick load at most $(n+1)/b_v<{1}/{\Delta}$,
  and, hence, will not be in $\Gamma_\Delta(r)$.
  Also, $r<h_\Delta$ implies that $b(\Gamma_\Delta(r))\ge\Delta$ as there
  exists a node with load $h_\Delta$ in $\Gamma_\Delta(r)$ and this node has
  degree bound at least $\Delta$.

  We now partition the range of $r$ (from $1/\Delta$ to $h_\Delta$) into
  logarithmically many intervals. We define the $q$-th interval by
\begin{align*}
U(q) &= \left\{ r\mid {1}/{\Delta} 
\le r < h_{\Delta} \ \wedge \ 2^q/\Delta \leq b(\Gamma_{\Delta}(r)) <
2^{q+1}/\Delta \right\} \enspace,
\end{align*}
for $q\in\{0,\dots,\lceil\log_2((n+1)n)\rceil\}$.
We further set
$\overline{r}(q) = \max U(q)$ and $\underline{r}(q) = \min U(q)$,
and call $\overline{r}(q)-\underline{r}(q)$ the length of the $q$-th interval.
Since there are only $2\log_2n+3$ possible choices for $q$ there must exist an
interval of length at least $(h_\Delta-1/\Delta)/(2\log_2n+3)$.

Consider a node $v$ that is contained in $\Gamma_\Delta(\overline{r})$
and, hence, also in $\Gamma_\Delta(\underline{r})$. This node starts
contributing to $\operatorname{excess}(\underline{r},\Delta)$, once its uptick
load reached $\underline{r}$ and contributes at least until its uptick load
reaches $\overline{r}$. Hence, the total contribution is at least
$\operatorname{deg}(\overline{r})-\operatorname{deg}(\underline{r})$, where
$\operatorname{deg}(\overline{r})$ and $\operatorname{deg}(\underline{r})$
denotes the degree of node $v$ when reaching uptick load $\overline{r}$ and
$\underline{r}$, respectively. We have,
\begin{equation*}
\begin{split}
\operatorname{deg}(\overline{r})-\operatorname{deg}(\underline{r})
&= (\lceil\overline{r}b_v\rceil-2)-(\lceil\underline{r}b_v\rceil-2)\\
&\ge (\overline{r}-\underline{r})b_v-1\enspace.
\end{split}
\end{equation*}
Summing this over all nodes in $\Gamma_\Delta(\overline{r})$ 
gives 
\begin{equation*}
\begin{split}
\operatorname{excess}(\underline{r},\Delta) 
  &\ge (\overline{r}-\underline{r})\cdot b(\Gamma_\Delta(\overline{r}))-|\Gamma_\Delta(\overline{r})|\\
  &\ge \frac{1}{2}(\overline{r}-\underline{r})\cdot b(\Gamma_\Delta(\underline{r}))-|\Gamma_\Delta(\underline{r})|\\
  &\ge \frac{h_\Delta-1/\Delta}{4\log_2n+6}\cdot b(\Gamma_\Delta(\underline{r}))-|\Gamma_\Delta(\underline{r})|\enspace,
\end{split}
\end{equation*}
where the second inequality uses the fact that
$|\Gamma_\Delta(\overline{r})|\le|\Gamma_\Delta(\underline{r})|\le|\Gamma_\Delta(\overline{r})|/2$.
\end{proof} 

\begin{claim}
$h_{\Delta}\leq (24\log_2 n+37) \OPT$.
\end{claim}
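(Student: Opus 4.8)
The plan is simply to chain together the three results already in hand: the preceding Claim (which supplies a threshold $r$ at which $\operatorname{excess}(r,\Delta)$ is comparable to $(h_\Delta-1/\Delta)\cdot b(\Gamma_\Delta(r))/\log n$), Lemma~\ref{lem:largedemand} (which upper bounds $\operatorname{excess}(r,\Delta)$ by $2|D(r)|+3|\Gamma_\Delta(r)|$), and Lemma~\ref{lem:fitting} (which converts $|D(r)|/b(\Gamma_\Delta(r))$ into $\OPT$). The only auxiliary facts needed are that every vertex counted in $\Gamma_\Delta(r)$ has degree bound at least $\Delta$, so $|\Gamma_\Delta(r)|\le b(\Gamma_\Delta(r))/\Delta$, and that $\OPT\ge 1/\Delta$; both are already recorded above. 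We may also assume $h_\Delta>1/\Delta$, since otherwise $h_\Delta\le 1/\Delta\le\OPT$ and there is nothing to prove.

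Concretely, fix the threshold $r>0$ given by the preceding Claim. Applying Lemma~\ref{lem:largedemand} with $b=\Delta$ and combining it with the Claim's inequality gives
\[
\frac{h_\Delta-1/\Delta}{4\log_2 n+6}\cdot b(\Gamma_\Delta(r))-|\Gamma_\Delta(r)|\ \le\ \operatorname{excess}(r,\Delta)\ \le\ 2|D(r)|+3|\Gamma_\Delta(r)|,
\]
and therefore
\[
\frac{h_\Delta-1/\Delta}{4\log_2 n+6}\cdot b(\Gamma_\Delta(r))\ \le\ 2|D(r)|+4|\Gamma_\Delta(r)|.
\]
Since in the Claim's proof $r$ is chosen with $r<h_\Delta$, we have $b(\Gamma_\Delta(r))\ge\Delta>0$, so we may divide by $b(\Gamma_\Delta(r))$; bounding $|\Gamma_\Delta(r)|/b(\Gamma_\Delta(r))\le 1/\Delta$ on the last term and invoking Lemma~\ref{lem:fitting} on the first, we obtain
\[
\frac{h_\Delta-1/\Delta}{4\log_2 n+6}\ \le\ 2\,\OPT+\frac{4}{\Delta}\ \le\ 6\,\OPT,
\]
where the last step uses $1/\Delta\le\OPT$. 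Rearranging and using $1/\Delta\le\OPT$ once more,
\[
h_\Delta\ \le\ 6(4\log_2 n+6)\,\OPT+\frac{1}{\Delta}\ =\ (24\log_2 n+36)\,\OPT+\frac{1}{\Delta}\ \le\ (24\log_2 n+37)\,\OPT,
\]
which is exactly the claim.

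There is no real obstacle here: all of the combinatorial substance lives in Lemmas~\ref{lem:multiwaycut}, \ref{lem:largedemand}, \ref{lem:fitting} and in the preceding Claim, and what remains is bookkeeping. The only points that need a little care are checking that the division by $b(\Gamma_\Delta(r))$ is legitimate (which is why the relevant range of $r$ was kept strictly below $h_\Delta$, forcing some vertex of load $h_\Delta$ and degree bound $\ge\Delta$ into $\Gamma_\Delta(r)$) and remembering to absorb the stray additive $1/\Delta$ into $\OPT$. A separate, short argument — outside the scope of this statement — will then be required to pass from $h_\Delta$ to the genuine maximum load $h$ over all vertices, handling the vertices of degree bound below $\Delta$ on their own.
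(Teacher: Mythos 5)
Your proposal is correct and follows essentially the same route as the paper: solve the preceding Claim's inequality for $h_\Delta$, bound $\operatorname{excess}(r,\Delta)$ via Lemma~\ref{lem:largedemand}, convert $|D(r)|/b(\Gamma_\Delta(r))$ to $\OPT$ via Lemma~\ref{lem:fitting}, use $|\Gamma_\Delta(r)|\le b(\Gamma_\Delta(r))/\Delta$ and $\OPT\ge 1/\Delta$, and the arithmetic yields the constant $24\log_2 n+37$ exactly as in the paper. The added remarks (legitimacy of dividing by $b(\Gamma_\Delta(r))$ and the trivial case $h_\Delta\le 1/\Delta$) are harmless refinements of the same argument.
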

\begin{proof}
If we use the $r$ from the previous claim and solve for $h_\Delta$ we obtain
\begin{equation*}
\begin{split}
h_\Delta
&\le(4\log_2 n+6)\cdot\frac{\operatorname{excess}(r,\Delta)+\Gamma_\Delta(r)}{b(\Gamma_\Delta(r))}+\frac{1}{\Delta}\\
&\le(4\log_2 n+6)\cdot\frac{2D(r)+4\Gamma_\Delta(r)}{b(\Gamma_\Delta(r))}+\frac{1}{\Delta}\\
&\le(4\log_2 n+6)\cdot \Big(2\OPT+4\frac{1}{\Delta}\Big)+\frac{1}{\Delta}\\
&\le(24\log_2 n+37)\cdot \OPT\enspace.
\end{split}
\end{equation*}
Here the second inequality is due to Lemma~\ref{lem:largedemand}, the third
uses Lemma~\ref{lem:fitting} and the fact that
$b(\Gamma_\Delta(r))\ge\Delta|\Gamma_\Delta(r)|$. The last inequality uses $\OPT\ge 1/\Delta$.
\end{proof}

\medskip
\noindent
We now bound the maximum load $h$ in terms of the restricted maximum load $h_{\Delta}$.
Consider a vertex $v^*$ with maximum load $\ell_{\alg}(v^*)=h$. If $b_{v^*}\geq
\Delta$ then $h_{\Delta}=h$ and we are done. Otherwise consider the last
iteration $i$ in which the degree of $v^*$ is increased in the solution. Let
$H$ denote the output of the algorithm at the end of the previous iteration $i-1$. The
degree of $v^*$ in the online solution is increased by at most two at iteration
$i$. Hence
\[h	\leq \ell^+_H(v^*) \leq \tau_i  \enspace .\]
Recall that in our greedy algorithm, $\tau_i$ is the minimum uptick load of a
path that satisfies the new demand. Let $P$ denote a path that connects $s_i$
and $t_i$ in an optimal solution. Recall that by the definition, $b_v\geq
\Delta$ for every vertex $v$ of $P$. For every vertex $v$ in $P$ we have
\begin{align*}
\tau_i &\leq \ell^+_H(v) \\
	&\leq h_{\Delta}+\frac{2}{b_v} &\ell_H(v)\leq \ell_{\alg}(v)\leq h_{\Delta}\\
	&\leq h_{\Delta}+\frac{2}{\Delta} & b_v\geq \Delta\\
	&\leq h_{\Delta}+2\OPT	 & \OPT\geq 1/\Delta\\
	&\leq O(\log n)\cdot\OPT &\text{by the above claim}
\end{align*}
Therefore leading to $h \leq O(\log n)\cdot\OPT$.
\end{proof}

\def\mydeg{\operatorname{deg}}
\def\mydegb{\operatorname{deg}'}

\section{An Asymptotically Tight Lower Bound}
In the following we show a lower bound for 
\odbst{}.
Consider a  graph $G=(X \uplus Z, E)$, with $|Z|=2^\ell$ and
$|X|=\binom{2^\ell}{2}$. 
For every pair $\{z_1,z_2\}$ of nodes from $Z$ there exists 
an edge $\{z_1,z_2\}\in E$ and a node $x\in X$ that is connected
to $z_1$ and $z_2$. An arbitrary node from $Z$ acts as root for the  Steiner
tree problem.

For a node $z\in Z$, an algorithm $A$, and a request sequence $\sigma$
(consisting of nodes from $X$) we use $\mydeg_{A,\sigma}(z)$ to denote the
number of neighbors of $z$ \emph{among all nodes from X} in the Steiner tree
obtained when running algorithm $A$ on request sequence $\sigma$. Similarly, we
use $\mydegb_{A,\sigma}(z)$ to denote the number of those neighbors of $z$ that
also appear in $\sigma$. Note that $\mydeg(\cdot)$ and $\mydegb(\cdot)$ ignore
edges between nodes from $Z$ as an algorithm (online or offline) can simply
connect all nodes in $Z$ in a cycle which increases the degree of any node by
only 2.

\def\ex{\operatorname{E}}
\begin{lemma} Fix a possibly randomized online algorithm $A$.
For any subset $S\subseteq Z$, $|S|=2^s$, $0\le s\le \ell$ there exists a request sequence
$\sigma_S$ consisting of terminals from $X$ s.t.\
\begin{itemize}
\item for a node $x\in \sigma_S$ both its neighbors in $G$ are from set $S$;
\item there exists a node $z^*\in S$ with $\ex[\mydegb_{A,\sigma}(z^*)]\ge s/2$;
\item there exists an offline algorithm $\mathrm{OFF}$ for servicing requests
  in $\sigma$ with 
  $\max_{z\in
    S}\{\mydeg_{\mathrm{OFF},\sigma}(z)\}\le 1$, and
  $\mydeg_{\mathrm{OFF},\sigma}(z)=0$ for $z\in (Z\setminus S)\cup \{z^*\}$.
\end{itemize}
\end{lemma}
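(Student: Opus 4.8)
The plan is to prove the lemma by induction on $s$. The base case $s=0$ is trivial: take $\sigma_S$ to be the empty sequence, pick $z^*$ to be the single node of $S$ (so $|S|=2^0=1$), and let $\mathrm{OFF}$ be the trivial algorithm; all three properties hold vacuously since $\ex[\mydegb_{A,\sigma}(z^*)]\ge 0$. For the inductive step, suppose the claim holds for all sets of size $2^{s-1}$, and let $S\subseteq Z$ with $|S|=2^s$. First I would split $S$ arbitrarily into two halves $S_1,S_2$ of size $2^{s-1}$ each. By the inductive hypothesis applied to $S_1$, there is a request sequence $\sigma_{S_1}$ whose terminals use only neighbors in $S_1$, a node $z_1^*\in S_1$ with $\ex[\mydegb_{A,\sigma_{S_1}}(z_1^*)]\ge (s-1)/2$, and an offline solution of load $\le 1$ on $S_1\setminus\{z_1^*\}$ and $0$ elsewhere; similarly for $S_2$, giving $z_2^*\in S_2$.

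The key idea for the inductive step is to \emph{adaptively} present $A$ with $\sigma_{S_1}$ followed by $\sigma_{S_2}$ (or, if $A$ is randomized, to simply run both in sequence — the order matters only in that we then look at the state of $A$). After these two phases, $z_1^*$ and $z_2^*$ each carry expected $\mydegb$-degree at least $(s-1)/2$, coming from edges incident to terminals already requested. Now I would append a third phase consisting of the single terminal $x\in X$ that is adjacent to $z_1^*$ and $z_2^*$ — this terminal has both its $G$-neighbors in $S_1\cup S_2=S$, as required. To serve it, $A$ must connect $x$ into its tree, which forces it to add the edge to $z_1^*$ or the edge to $z_2^*$ (recall edges inside $Z$ are free / ignored by $\mydeg$, but $x$'s only edges go to $z_1^*$ and $z_2^*$). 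Whichever endpoint $A$ chooses, that node's $\mydegb$ increases by $1$. Since by symmetry (or by a Yao-type / averaging argument over which of $S_1,S_2$ we "designate") we can ensure at least one of the two candidate nodes — call it $z^*$ — gets the $+1$ in expectation with the right bookkeeping, we obtain $\ex[\mydegb_{A,\sigma_S}(z^*)]\ge (s-1)/2 + 1/2 = s/2$. The offline algorithm for $\sigma_S$ is obtained by gluing the two offline solutions (each load $\le 1$ on its half, $0$ on $z_1^*,z_2^*$) and serving the final terminal $x$ via the edge to whichever of $z_1^*,z_2^*$ is \emph{not} chosen to be $z^*$: this keeps $\mydeg_{\mathrm{OFF}}\le 1$ on all of $S$ and keeps it $0$ on $Z\setminus S$ and on $z^*$ itself.

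The main obstacle is the second bullet: making the "$z^*$ gets the extra half" argument rigorous for a randomized (and possibly adaptive) algorithm $A$. The subtlety is that the final terminal $x$ forces exactly one of $z_1^*,z_2^*$ to gain degree, but which one is $A$'s choice, so we cannot simply claim both gain $1/2$. The clean fix is to build the designated node into the induction: when we recurse, we should be free to choose \emph{which} of the two halves will host the eventual $z^*$, so that we can couple the choice with $A$'s behavior. Concretely, I would strengthen the statement so that the recursion produces, for \emph{each} half, a candidate, and then at the top level argue that for at least one of the two possible designations the inequality $\ex[\mydegb]\ge s/2$ holds — this is where an averaging argument over $A$'s internal randomness, together with the fact that $z_1^*$ and $z_2^*$ cannot both be avoided by $A$ when serving $x$, seals the bound. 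The remaining conditions (terminal neighborhoods lying in $S$, and the existence of the low-load offline solution) are straightforward consequences of the construction and require only routine verification.
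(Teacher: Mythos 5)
Your overall strategy is the same as the paper's (induction on $s$, split $S$ into halves, concatenate the two recursively obtained sequences, append the single terminal $x$ adjacent to $z_1^*$ and $z_2^*$, pick $z^*$ as whichever endpoint the algorithm loads, and route the offline solution through the other endpoint), but there is one genuine gap in how you invoke the induction hypothesis for the second half. You apply the hypothesis to $S_2$ with the \emph{original} algorithm $A$, and then assert that after running $\sigma_{S_1}\circ\sigma_{S_2}$ both $z_1^*$ and $z_2^*$ have expected $\mydegb$ at least $(s-1)/2$. For an adaptive (and randomized) online algorithm this does not follow: the sequence $\sigma_{S_2}$ promised by the hypothesis is tailored to $A$ started from its initial state, whereas in your concatenation $A$ processes $\sigma_{S_2}$ from the state it reached after $\sigma_{S_1}$, where its behavior may be entirely different. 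The paper's fix is exactly the flexibility you are not using: since the lemma is quantified over \emph{all} online algorithms, one applies the induction hypothesis for $S_2$ to the shifted algorithm $\tilde A$ defined as ``$A$ after it has already received $\sigma_1$,'' obtaining a sequence $\sigma_2$ with $\ex[\mydegb_{\tilde A,\sigma_2}(z_2^*)]\ge (s-1)/2$, which is precisely the quantity relevant for the concatenated run. Your parenthetical ``simply run both in sequence --- the order matters only in that we then look at the state of $A$'' gestures at this but does not supply the argument.

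By contrast, the ``main obstacle'' you identify in the second bullet does not require the strengthening of the induction you propose. After the concatenation (done correctly via $\tilde A$), both $z_1^*$ and $z_2^*$ satisfy the $(s-1)/2$ bound; when $x$ is requested, $A$ must buy $\{x,z_1^*\}$ or $\{x,z_2^*\}$, so the two purchase probabilities sum to at least $1$ and one of them is at least $1/2$. You then simply \emph{declare} that endpoint to be $z^*$ a posteriori --- the lemma only asserts existence of $z^*$, so this choice may depend on $A$ --- and let $\mathrm{OFF}$ serve $x$ through the other endpoint, which keeps $\mydeg_{\mathrm{OFF},\sigma}(z^*)=0$ and all loads in $S$ at most $1$. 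So the averaging/Yao-type machinery and the modified induction statement are unnecessary; the real missing ingredient is the application of the hypothesis to the suffix algorithm $\tilde A$.
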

\begin{proof}
We prove the lemma by induction over $s$. The base case $s=0$ holds trivially
when choosing the empty request sequence. For the induction step consider an
arbitrary subset $S\subseteq Z$ with $|S|=2^{s+1}$. Partition $S$ into two
disjoint subsets $S_1$ and $S_2$ of cardinality $2^{s}$ each.

Let $\sigma_1$ be the request sequence that exists due to induction hypothesis
for set $S_1$. Hence, there is a node $z_1^*\in S_1$ with
$\ex[\mydeg'_{A,\sigma_1}(z_1^*)]\ge s/2$. Now, let $\tilde{A}$ behave like algorithm $A$
\emph{after} it already received a request sequence $\sigma_1$ (hence, it
starts with all edges that are chosen when running $A$ on $\sigma_1$;
note, however, that $\mydegb_{\tilde{A},\sigma_2}(\cdot)$ only takes into
account edges incident to nodes from $\sigma_2$). Due to
induction hypothesis for $\tilde{A}$ and set $S_2$ there exists a request sequence
$\sigma_2$ such that $\ex[\mydegb_{\tilde{A},\sigma_2}(z_2^*)]\ge s/2$ for a node $z_2^*\in
S_2$. Hence, the request sequence $\sigma=\sigma_1\circ\sigma_2$ fulfills
$\ex[\mydegb_{A,\sigma}(z_1^*)]\ge s/2$ and $\ex[\mydegb_{A,\sigma}(z_2^*)]\ge s/2$.

We extend the request sequence by appending the node $x$ that is connected to
$z_1^*$ and $z_2^*$ in $G$. After serving the request one of the edges
$\{x,z_1\}$ or $\{x,z_2\}$ must be chosen with probability at least 1/2 by $A$.
Hence, afterwards either $\ex[\mydegb_{A,\sigma}(z_1^*)]$ or
$\ex[\mydegb{\delta}_{A,\sigma}(z_2^*)]$ must be at least $(s+1)/2$.

It remains to argue that there exists a good offline algorithm. Combining the
offline algorithms $\mathrm{OFF}_1$ and $\mathrm{OFF}_2$ for $\sigma_1$ and
$\sigma_2$ gives an offline algorithm for $\sigma_1\circ\sigma_2$ that 
has $\max_{z}\{
\mydeg_{\mathrm{OFF},\sigma_1\circ\sigma_2}(z)\}\le 1$ 
and $\mydeg_{\mathrm{OFF},\sigma_1\circ\sigma_2}(z)\}=0$ for $z\notin S_1\cup
S_2$ and for $z\in\{z_1^*,z_2^*\}$. Now, when the node $x$ connected to $z_1^*$
and $z_2^*$ is appended to the request sequence the offline algorithm can
serve this request by either buying edge $\{x,z_1\}$ or $\{x,z_2\}$, and can
therefore gurantee that $z_*$ (either $z_1$ or $z_2$) has degree $0$.
\end{proof}

\medskip
\noindent
\begin{proof}[of Theorem~\ref{thm:LBst}]
	Choosing $s=\log n$ in the above lemma gives our lower bound.
\end{proof}


\newpage
\bibliographystyle{abbrv}
\bibliography{dbbib}

\newpage

\appendix
\section{Figures}

\begin{algorithm}[!h]
	\textbf{Input:} A graph $G$, and an online stream of demands $(s_1,t_1), (s_2,t_2), \ldots$.
	
	\textbf{Output:} A set $H$ of edges such that every given demand $(s_i,t_i)$ is connected via $H$.
	
	\textbf{Offline Process:}
		\begin{algorithmic}[1]
			\STATE Initialize $H=\varnothing$.
		\end{algorithmic}
	
	\textbf{Online Scheme; assuming a demand $(s_i,t_i)$ is arrived:}
	
	\begin{algorithmic}[1]
		\STATE Compute $P_i$, a $(s_i, t_i)$-path with the smallest uptick load $\ell^+_H(P_i^*)$ in its extension part.
		\begin{itemize}
			\item Shortcut the connected components of $H$ by replacing the edges of a component with that of a clique.
			\item In the resulting graph, define the distance of a vertex $v$ from $s_i$ as the minimum uptick load of a $(s_i,v)$-path.
			\item Find $P_i$ by evoking a Dijkstra-like algorithm according to this notion of distance. 
		\end{itemize}
		\STATE Set $H=H\cup P_i^*$.
	\end{algorithmic}
	\caption{Online Degree-Bounded Steiner Forest}
	\label{alg:dbsf}
\end{algorithm}

\begin{figure}[!h]
	\begin{center}
		\includegraphics[width=0.7\textwidth]{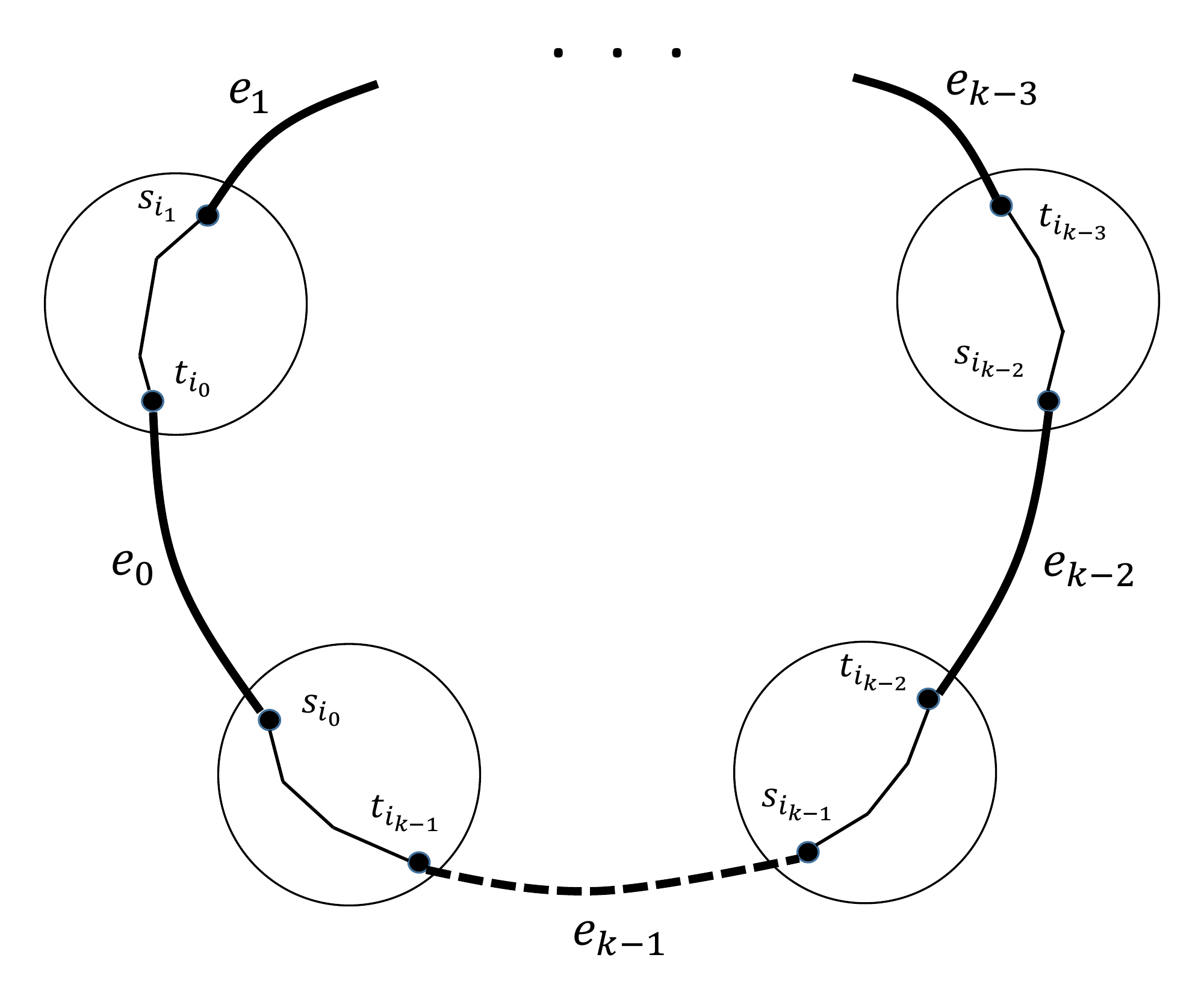}
	\end{center}
	\caption{Existence of a cycle implies that of a path with low uptick load in its extension part.}
	\label{fig:lemsep}
\end{figure}

\section{Lower Bounds for Other Degree-Bounded Steiner Connectivity Problems}
In this section we consider strong lower bounds for the general form of two variants of degree-bounded network design problems.
\subsection{{\sc Online Degree-Bounded Edge-weighted Steiner Tree}}
Below we present a graph instance $G=(V,E)$ for online bounded-degree
edge-weighted Steiner tree in which for any (randomized) online algorithm $A$
there exists a demand sequence for which $A$ either violates the degree bound
by a large factor or generates a much larger weight than the optimum.


Consider a graph $G$ as shown in Figure~\ref{fig:1}, which has $n=2k+1$ nodes
and $3k$ edges. Every node $i$ ($1\leq i\leq k$) is connected to the root with
a zero-weight edge and to node $k+i$ with weight $n^i$. In addition, there
exist zero-weight edges connecting node $i$ ($k+1 \leq i < 2k$) to node $i+1$,
and there exists a zero-weight edge that connects node $2k$ to the root.
 We assume that all node weights $b_v$ are equal to one and the degree bound $b$
 for $\OPT$ is equal to $3$.
\begin{figure*}[!h]
	\centering
	
	\begin{subfigure}[b]{0.44\textwidth}
		\includegraphics[width=\textwidth]{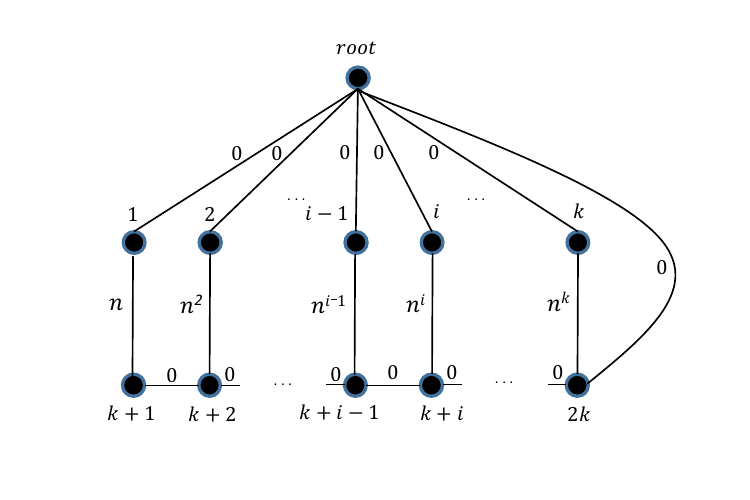}
		\caption{The graph $G$ consists of $2k+1$ nodes and $3k$ edges.}
		\label{fig:1}
	\end{subfigure}%
	\hspace{0.1\textwidth}
	\begin{subfigure}[b]{0.44\textwidth}
		\includegraphics[width=\textwidth]{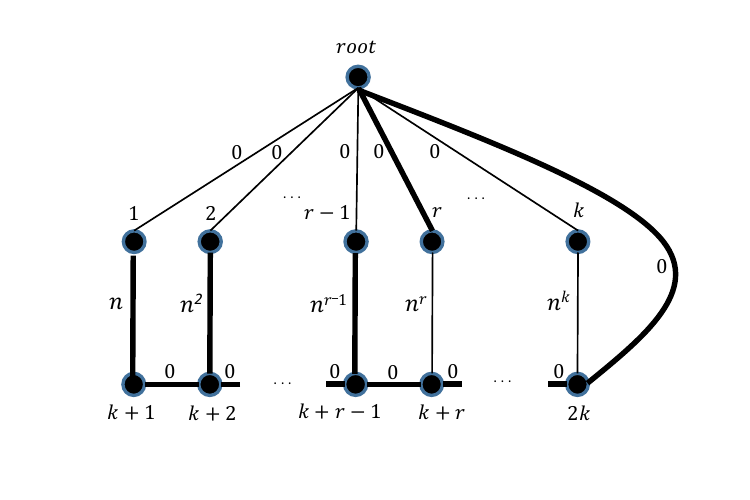}
		\caption{The highlighted subtree represents an optimum solution $\OPT_3$.}
		\label{fig:2}
	\end{subfigure}%
	\caption{The hard example for edge-weighted Steiner tree}
\end{figure*}

\begin{proof}[of Theorem~\ref{thm:ewLB}]
  The adversary consecutively presents terminals starting from node $1$. At
  each step $i$ the algorithm $A$ adds some edges to its current solution
  such that the $i^{th}$ terminal $t_i=i$ gets connected to the $\mathit{root}$.

  We use $X_i$ to denote the subset of edges chosen by $A$ after step $i$. We also use
  $0\leq p_{ij}\leq 1$ and $0\leq q_{ij}\leq 1$ to denote the probability that
  the edges $\{i,\mathit{root}\}$ and $\{i,k+i\}$, respectively, are in $X_j$ . After each
  step $j$, all terminals $t_i$ ($i\leq j$) must be connected to the $\mathit{root}$ by
  at least one of $\{i,\mathit{root}\}$ or $\{i,k+i\}$. Therefore $p_{ij}+q_{ij}\geq 1$.
  In addition, for every $j_1\leq j_2$ we have $q_{ij_1}\leq q_{ij_2}$ and
  $p_{ij_1}\leq p_{ij_2}$, because $X_{j_1} \subseteq X_{j_2}$.
  The adversary stops the sequence at the first step $r$ for which $q_{rr}>{1}/{2}$.
  If this never happens the sequence is stopped after requesting $k$ nodes.


  We use $\OPT_b$ to denote the weight of a minimum Steiner tree with maximum
  degree $b=3$. In order to find an upper bound for $\OPT_b$, consider the
  following tree $T$:
\begin{align*}
T&=\{\{i,k+i\}|1\leq i\leq r-1\} \cup \{\{i,i+1\}|k\leq i<2k\} \cup \{\{2k,\mathit{root}\},\{r,\mathit{root}\}\}\enspace.
\end{align*}
As we can observe in Figure~\ref{fig:2}, $T$ meets the degree bounds and
connects the first $r$ terminals to the root, so forms a valid solution. We have
\begin{equation*}
w(\OPT) \leq w(T)=\sum_{i=1}^{r-1} n^i \in O(n^{r-1})\enspace.
\end{equation*}

\noindent
Back to algorithm $A$, the adversary causes one of the following two
cases: 
\begin{enumerate}[leftmargin=*,label=(Case~\alph*)]
\item
The process stops at step $r$ with
$q_{rr}>\frac{1}{2}$. Then
\begin{align*}
\ex[w(X_r)]&=\sum_{i=1}^{r} q_{ir}\cdot n^i \geq q_{rr}\cdot n^r\geq \frac{n^r}{2}\enspace.
\end{align*}
Hence, $\ex[w(X)]\ge\Omega(n)\cdot w(\OPT)$.
\item
The process continues until step $k$, i.e., for every $i\leq k$, $q_{ii}\leq\frac{1}{2}$.
\begin{align*}
\ex[\operatorname{deg}(\mathit{root})] 
&\ge\sum_{i=1}^{k}p_{ik}\ge\sum_{i=1}^{k}p_{ii}\geq\sum_{i=1}^{k}(1-q_{ii})\ge\frac{k}{2}=\frac{n-1}{4}\enspace.
\end{align*}
Hence, $\ex[\operatorname{deg}(\mathit{root})]\ge  \Omega(n)\cdot b$.
\end{enumerate}
This shows that for any online algorithm $A$ there is a demand sequence on
which $A$ either generates a large weight or violates the degree bound by a
large factor.
%
%
\end{proof}

\subsection{{\sc Online Degree-Bounded Group Steiner Tree}}
In this section, presenting an adversary scenario, we show there is no deterministic algorithm for \odbgst{} with competitive ratio $o(n)$ even if $G$ is a star graph.

\begin{proof}[of Theorem~\ref{thm:gstLB}]
	For any integer $n>1$, we provide a graph instance $G$ of size $n$ and an online scenario, in which no deterministic algorithm can obtain a competitive ratio better than $n-1$.
	Let $G$ be a star with $n-1$ leaves $v_2$ to $v_{n}$, and $v_{1}$ be the internal node. For an algorithm $\mathcal{A}$, we describe the adversary scenario as follows.
	
	Let $v_{1}$ be the root. Let the first demand group be the set of all leaves. Whenever $\mathcal{A}$ connects a node $v_i$ to $v_1$ in $H$, adversary removes the selected nodes $v_i$ from the next demand groups, until all leaves are connected to $v_1$ in $H$. In particular let $C$ denote the set of all nodes connected to $v_1$ in $H$ so far. Let the demand group be the set of all leaves in $\{v_2, v_3, \ldots, v_n\} \setminus C$. We do this until $\{v_2, v_3, \ldots, v_n\} \setminus C = \emptyset$, which means every leaf is connected to $v_1$ in $H$. $G$ is a star, thus a leaf $v_i$ is connected to $v_1$ in $H$ \text{iff} $H$ includes the edge between $v_i$ and $v_1$. Thus after all demands $\mathcal{A}$ has added all edges in $G$ to $H$. Hence $deg(v_1)=n-1$.
	
	Now consider the optimal offline algorithm. Let $g_i$ denote the $i$-th demand group. Assume we have $k$ demand groups. By construction of the demand groups $g_{k} \subset g_{k-1} \subset \ldots \subset g_{1}$. Thus there is a single node that exists in all group demands. Hence the optimal offline algorithms only needs to connect that node to the root in $H$. Therefore, the degree of each node is at most $1$ and the competitive ratio is $n-1$.
\end{proof}

\end{document}